\tikzset{
    int/.style={
           rectangle,
           rounded corners,
           draw=black, thin, fill=black!20,
           minimum height=2em,
           inner sep=2pt,
           text centered,
           },
}
\newtheorem{theorem}{Theorem}
\newtheorem{claim}{Claim}
\newtheorem{example}{Example}
\newtheorem{definition}{Definition}
\newtheorem{remark}{Remark}
\begin{document}
\pdfoutput=1
\allowdisplaybreaks[1]
\newlength\figureheight
\newlength\figurewidth

\title{ Guess \& Check Codes for Deletions, Insertions, and Synchronization\vspace{-0.2cm} \thanks{This is an extended version of the conference paper in~\cite{GC}, which was presented at the 2017 IEEE International Symposium on Information Theory.} \thanks{This work was supported  in parts by NSF Grant CCF 15-26875.}}
\author{
\IEEEauthorblockN{Serge Kas Hanna,  Salim El Rouayheb \\ ECE Department, Rutgers University\\ serge.k.hanna@rutgers.edu, salim.elrouayheb@rutgers.edu
\vspace{-0.1cm}
}
}
\maketitle

\begin{abstract} We consider the problem of constructing codes that can correct $\delta$ deletions occurring in an arbitrary binary string of length $n$ bits. Varshamov-Tenengolts (VT) codes, dating back to 1965,  are zero-error single deletion $(\delta=1)$ correcting codes, and have an asymptotically optimal redundancy. Finding similar codes for $\delta \geq 2$ deletions remains an open problem. In this work, we relax the standard zero-error~(i.e., worst-case) decoding requirement by assuming that the positions of the $\delta$~deletions (or insertions) are independent of the codeword. Our contribution is a new family of explicit codes, that we call Guess~\&~Check (GC) codes, that can correct with high probability up to a constant number of $\delta$ deletions (or insertions). GC codes are systematic; and have deterministic polynomial time encoding and decoding algorithms. We also describe the application of GC codes to file synchronization. 
\end{abstract}

\section{Introduction}
The  deletion channel is probably the most notorious example of a point-to-point channel whose capacity remains unknown. The bits that are deleted by this channel are completely removed from the transmitted sequence and their locations are unknown at the receiver (unlike erasures). For example, if $1010$ is transmitted, the receiver would get $00$ if the first and third bits were deleted. Constructing efficient codes for correcting deletions has also been a challenging task. Varshamov-Tenengolts (VT) codes~\cite{VT65} are the only deletion codes with asymptotically optimal redundancy and can correct only a single deletion. The study of deletion correcting codes has many applications such as file synchronization~\cite{V15,R10,Y14,S16,M11} and DNA-based storage~\cite{RO}. 

In this work, we are particularly motivated by applications of deletion correcting codes to remote file synchronization. Consider the setting where two remote nodes (servers) are connected by a noiseless communication link and have different copies of the same file, where one copy is an edited version of the other. The edits in general include deletions and insertions. We start our discussion by focusing on edits caused by deletions only\footnote{Insertions are discussed in subsequent sections of the paper.}. The goal is to synchronize the two files with minimal number of communicated bits. If a deletion correcting code is {\em systematic}, then it can be used to synchronize the files in the following way. One node uses the code to compute the parity bits of its file and transfers these bits to the other node. Then, the other node uses these parity bits to decode the deletions. In this paper, we construct a new family of explicit codes for multiple deletions, which we call Guess \& Check (GC) codes, that satisfy the systematic property. To the best of our knowledge, the existing multiple deletion correcting codes in the literature (e.g., \cite{B16,D01,R05}) are non-systematic and non-linear\footnote{The non-linearity implies that these codes cannot be made systematic by a linear transformation.}, and hence cannot be applied to remote file synchronization\footnote{For non-systematic codes, the whole file in addition to the redundant bits would need to be transmitted, which is inefficient for remote file synchronization applications.}. 

\subsection{Related work}
The capacity of the deletion channel has been studied in the probabilistic model. In the model where the deletions are iid and occur with a fixed probability $p$, an immediate upper bound on the channel capacity is given by the capacity of the erasure channel $1-p$.  Mitzenmacher and Drinea showed in~\cite{M06} that the capacity of the deletion channel in the iid model is at least $(1-p)/9$. Extensive work in the literature has focused on determining lower and upper bounds on the capacity~\cite{M06,D06,D07,Ven13,K13,Rah15}. We refer interested readers to the  comprehensive survey by Mitzenmacher~\cite{M09}. Ma {\em et al.}~\cite{M11} also studied the capacity in the bursty model of the deletion channel, where the deletion process is modeled by a Markov chain.

A separate line of work has focused on constructing zero-error codes that can correct a given number of deletions $\delta$. Levenshtein  showed in~\cite{L66} that VT codes~\cite{VT65} are capable of correcting a single deletion ($\delta=1$) with zero-error. The redundancy of VT~codes is asymptotically optimal ($\log(n+1)$ bits). More information about the VT codes and other properties of single deletion codes can be found in~\cite{S02}. VT codes have also been used to construct codes that can correct a combination of a single deletion and multiple adjacent transpositions~\cite{RO}. However, finding VT-like codes for multiple deletions ($\delta \geq 2$) is an open problem. In~\cite{L66}, Levenshtein provided bounds showing that the asymptotic number of redundant bits needed to correct $\delta$ bit deletions in an $n$ bit codeword is $c~\delta \log n$ for some constant $c>0$. Levenshtein's bounds were later generalized and improved in~\cite{N14}. The simplest zero-error code for correcting $\delta$~deletions is the $(\delta+1)$ repetition code, where every bit is repeated $(\delta+1)$ times. However, this code is inefficient because it requires $\delta n$ redundant bits, i.e., a redundancy that is linear in $n$. Helberg codes~\cite{H02,A12} are a generalization of VT codes for multiple deletions. These codes can correct multiple deletions but their redundancy is at least linear in $n$ even for two deletions. Some of the works in the literature also studied the problem in the regime where the number of deletions is a constant fraction of $n$~\cite{S99,G14,LG16,Ob} (rather than a constant number). The codes constructed for this regime have $\mathcal{O}(n)$ redundancy. Recently in~\cite{B16}, Brakensiek {\em et al.} constructed zero-error codes with logarithmic redundancy, and polynomial time encoding and decoding, that can decode $\delta$ deletions ($\delta$ fixed) in so-called ``pattern rich" strings, which are a special family of strings that exist for large block lengths (beyond $10^6$ bits for 2 deletions). 
\begin{figure*}
\centering
\resizebox{1\textwidth}{!}{
\begin{tikzpicture}[node distance=2.5cm,auto,>=latex']
\draw (-0.6,0) -- (-0.45,0);
    \node [int] (c) [text width=0.8cm,align=center]{\scriptsize Binary to $q-$ary};
    \node (b) [left of=c,node distance=1.5cm, coordinate] {a};
    \node [int] (z) [right of=c, node distance=3.4cm,text width=2.75cm,align=center] {\scriptsize Systematic MDS $\left(k/\log k+c,k/\log k \right)$};
    \node [int] (y) [right of=z, node distance=3.8cm,text width=0.8cm,align=center] {\scriptsize $q-$ary to binary};
    \node [int] (y1) [right of=y, node distance=3cm, text width=1.5cm,align=center] {\scriptsize {$(\delta+1)$ repetition of parity bits}};
    \node [coordinate] (end) [right of=c, node distance=2cm]{};
    \path[->] (b) edge node {\scriptsize $\mathbf{u}$} node[below] {\scriptsize $k$ bits} (-0.6,0);
    \draw[->] (c) edge node {\scriptsize $\mathbf{U}$} (z) ;
     \node(h1) [right of=c,node distance=1.2cm] [below] {\scriptsize $k/\log k$};
    \node(h2) [below of=h1,node distance=0.3cm] {\scriptsize symbols};
    \node(q1) [below of=c,node distance=0.8cm] {\scriptsize $q=k$};
    \node(q2) [below of=y,node distance=0.8cm] {\scriptsize $q=k$};
    \path[->] (z) edge node {\scriptsize $\mathbf{X}$} (y); 
    \node(i1) [right of=z,node distance=2.37cm] [below] {\scriptsize $k/\log k+c$};
   \node(name)[above of=i1,node distance=1.2cm] [above] {\scriptsize Guess \& Check (GC) codes};
    \node(i2) [below of=i1,node distance=0.3cm] {\scriptsize symbols};
    \node(e1) [right of=y1,node distance=3.3cm] {};
    \path[->] (y1) edge node {\scriptsize $\mathbf{x}$} (e1);
    \path[->] (y) edge node {} (y1);
    \node(t1) [right of=y,node distance=1.33cm] [below] {\scriptsize $k+c\log k$};
    \node(t2) [below of=t1,node distance=0.3cm] {\scriptsize bits};
    \node(tt1) [right of=y1,node distance=2.05cm] [below] {\scriptsize $k+c(\delta +1)\log k$};
    \node(tt2) [below of=tt1,node distance=0.3cm] {\scriptsize bits}; 
    \node(b1) [above of=c,node distance=0.8cm] {\scriptsize Block I};
    \node(b2) [above of=z,node distance=0.565cm] {\scriptsize Block II};
    \node(b3) [above of=y,node distance=0.8cm] {\scriptsize Block III};
    \node(b4) [above of=y1,node distance=0.8cm] {\scriptsize Block IV};
    
    \draw[dashed] (-0.6,-1) rectangle (11.1,1);
\end{tikzpicture} }
\captionsetup{font=footnotesize}
\caption{General encoding block diagram of GC codes for $\delta$ deletions. Block I: The binary message of length $k$ bits is chunked into adjacent blocks of length $\log k$ bits each, and each block is mapped to its corresponding symbol in $GF(q)$ where $q=2^{\log k}=k$. Block II: The resulting string is coded using a systematic $\left(k/\log k+c,k/\log k \right)$ $q-$ary MDS code where $c>\delta$ is the number of parity symbols. Block III: The symbols in $GF(q)$ are mapped to their binary representations. Block IV: Only the parity bits are coded using a $(\delta+1)$ repetition code.}
\label{fig:1}
\end{figure*}
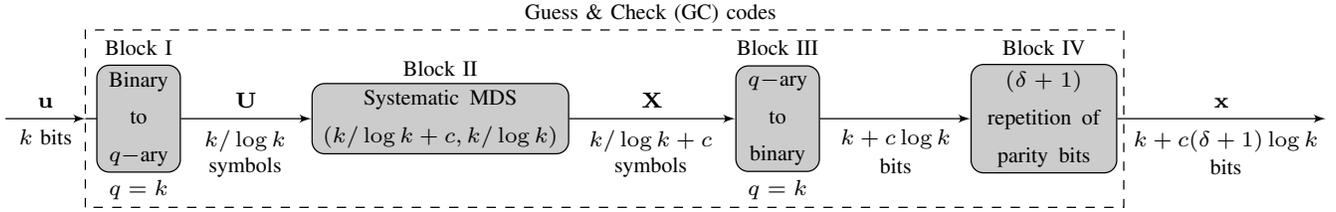
 
 The results mentioned above~\cite{VT65,L66,H02,A12,S99,G14,LG16,B16} focus on correcting deletions with zero-error, which corresponds to correcting worst-case adversarial deletions. There has also been some work on codes that correct random deletions. Davey and Mackay~\cite{D01} constructed watermark codes that can correct iid deletions and insertions. Ratzer~\cite{R05} constructed marker codes and showed that these codes can outperform watermark codes for low deletion/insertion probabilities.
 
 \subsection{Contributions}
In this paper, we construct {\em systematic} codes that can correct multiple deletions with logarithmic redundancy. The systematic property of these codes enables their application to remote file synchronization. Since file synchronization algorithms are probabilistic in general (e.g., \cite{V15,R10,Y14,S16,M11}), we relax the zero-error (worst-case) requirement and construct codes that have a low probability of decoding failure\footnote{A  {\em decoding failure} event corresponds to the decoder not being able to decode the transmitted sequence and outputting ``failure to decode" error message. When the decoder does not output the error message, the decoded string is guaranteed to be the correct one.}. To this end, we assume that the information message is uniform iid, and that the positions of the deletions are independent of the codeword, i.e., oblivious deletions\footnote{Codes that can correct $\delta$ oblivious deletions can also correct $\delta$ deletions that are randomly distributed based on any particular distribution.}. To the best of our knowledge, this is the first work to construct systematic codes for multiple deletions. 

Namely, we make the following  contributions: (i)~We construct new explicit systematic codes, which we call  Guess~\&~Check~(GC) codes,  that can correct, with high probability, and in polynomial time, up to a constant number of deletions $\delta$ (or insertions) occurring in a binary string. The encoding and decoding algorithms of GC codes are deterministic. Moreover, these codes have logarithmic redundancy of value $n-k=c(\delta+1)\log k$, where $k$ and $n$ are the lengths of the message and the codeword, respectively, and $c>\delta$ is a code parameter that is independent of $k$; (ii)~GC codes enable different trade-offs between redundancy, decoding complexity, and probability of decoding failure; (iii)~We implemented GC codes and the programming code can be found and tested online on the link in~\cite{Software}. Based on our implementations, we provide numerical simulations on the decoding failure of GC codes and compare these simulations to our theoretical results. For instance, we observe that a GC code with rate 0.8 can correct up to 4~deletions in a message of length 1024~bits with no decoding failure detected within 10000 runs of simulations;
(iv)~We demonstrate how GC codes can be used as a building block for file synchronization algorithms by including these codes as part of the synchronization algorithm proposed by Venkataramanan {\em et al.} in \cite{V15,R10}. As a result, we provide numerical simulations highlighting the savings in number of rounds and total communication cost. 

\subsection{Organization} 
The paper is organized as follows. In Section~\ref{sec:2}, we introduce the necessary notations used throughout the paper. We state and discuss the main result of this paper in Section~\ref{sec:3}. In Section~\ref{sec:4}, we provide encoding and decoding examples on GC codes. In Section~\ref{sec:5}, we describe in detail the encoding and decoding schemes of GC codes.
The proof of the main result of this paper is given in Section~\ref{sec:6}. In Section~\ref{sec:6.1}, we explain the trade-offs achieved by GC codes. In Section~\ref{sec:i}, we explain how these codes can be used to correct $\delta$ insertions instead of $\delta$ deletions. The results of the numerical simulations on the decoding failure of GC codes and their application to file synchronization are shown in Sections~\ref{sec:7} and~\ref{sec:8}, respectively. We conclude with some open problems in Section~\ref{sec:11}.
\section{Notation}
\label{sec:2}
Let $k$ and $n$ be the lengths in bits of the message and codeword, respectively.  Let $\delta$ be a constant representing the number of deletions. Without loss of generality, we assume that $k$ is a power of $2$. GC codes are based on $q-$ary systematic \mbox{$\left(\left \lceil k/\log k \right \rceil + c,\left \lceil k/\log k \right \rceil \right)$} MDS codes, where \mbox{$q=k>\left \lceil k/\log k \right \rceil + c$} and \mbox{$c>\delta$} is a code parameter representing the number of MDS parity symbols\footnote{Explicit constructions of systematic Reed-Solomon codes (based on Cauchy or Vandermonde matrices) always exist for these parameters.}.
We will drop the ceiling notation for $\left \lceil k/\log k \right \rceil$ and simply write $k/\log k$. All logarithms in this paper are of base $2$. The block diagram of the encoder is shown in Fig.~\ref{fig:1}. We denote binary and $q-$ary  vectors by lower and upper case bold letters respectively, and random variables by calligraphic letters.

\section{Main Result}
\label{sec:3}
Let $\mathbf{u}$ be a binary vector of length $k$ with iid Bernoulli$(1/2)$ components representing the information message. The message $\mathbf{u}$ is encoded into the codeword $\mathbf{x}$ of length $n$ bits using the Guess \& Check (GC) code illustrated in Fig.~\ref{fig:1}. The GC decoder, explained in Section~\ref{sec:5}, can either decode successfully and output the decoded string, or output a ``failure to decode" error message because it cannot make a correct decision. The latter case is referred to as a {\em decoding failure}, and its corresponding event is denoted by $F$.
\begin{theorem} 
\label{thm:1} 
Guess \& Check (GC) codes can correct in polynomial time up to a constant number of $\delta$ deletions occurring within~$\mathbf{x}$. Let $c>\delta$ be a constant integer. The code has the following properties:
\begin{enumerate}[leftmargin=*]
\item Redundancy: $n-k = c(\delta+1) \log k$ bits. 
\item Encoding complexity is $\mathcal{O}(k \log k)$, and  decoding complexity is $\mathcal{O}\left(\frac{k^{\delta+1}}{\log^{\delta-1}k}\right)$.
\item For any $\delta$ deletion positions chosen independently of $\mathbf{x}$, the probability that the decoding fails for a uniform iid message is: $Pr(F) = \mathcal{O} \left( \frac{k^{2\delta-c}}{\log^{\delta} k} \right)$. 
\end{enumerate}
\end{theorem}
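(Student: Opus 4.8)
The plan is to dispatch properties (1) and (2) by bookkeeping on the construction of Fig.~\ref{fig:1}, and to concentrate the real work on the failure probability in (3). For (1): the codeword is the $k$ systematic bits followed by the $c$ MDS parity symbols written in binary ($c\log k$ bits) and then repeated $(\delta+1)$ times, so $n-k=c(\delta+1)\log k$. For (2): encoding is two binary$\leftrightarrow$$GF(q)$ conversions ($\mathcal{O}(k)$), one systematic MDS (Reed--Solomon) encoding over $GF(q)$ on $\mathcal{O}(k/\log k)$ symbols ($\mathcal{O}(k\log k)$ bit-operations), plus the repetition, which is negligible. For decoding I would spell out the ``guess \& check'' loop: (a) recover the $c$ parity symbols from the repeated block deterministically --- each original parity bit generates a run whose length is a multiple of $\delta+1$, and since at most $\delta$ deletions occur, rounding each received run length up to the nearest multiple of $\delta+1$ recovers all parities; (b) for each of the $\mathcal{O}(1)$ ways to split the $\delta$ deletions between the systematic and parity parts, and for each of the $\mathcal{O}\!\left((k/\log k)^{\delta}\right)$ block-level patterns describing which of the $k/\log k$ blocks are hit and by how many deletions (this fixes all block boundaries after re-chunking), read off the unaffected symbols, mark the affected ones as erasures, MDS-decode, re-encode, and test consistency with the received string; (c) output the message if exactly one pattern is consistent, and declare failure otherwise. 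Each iteration costs $\mathcal{O}(k\log k)$ (an erasure decode touching $\mathcal{O}(\delta)$ positions, a re-encoding, and a comparison), so the total is $\mathcal{O}\!\left((k/\log k)^{\delta}\cdot k\log k\right)=\mathcal{O}\!\left(k^{\delta+1}/\log^{\delta-1}k\right)$.

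For (3), note that the true pattern is always among the guesses, always yields a consistent candidate equal to $\mathbf{u}$, so a decoding failure occurs only if some \emph{incorrect} guess $w$ also yields a consistent candidate differing from $\mathbf{u}$; hence $Pr(F)\le\sum_{w\text{ incorrect}}Pr(w\text{ consistent},\ \widehat{\mathbf{u}}^{(w)}\ne\mathbf{u})$. For a fixed incorrect $w$ I would argue as follows. If every symbol that $w$ reads off the received string happens to coincide with the true message symbol (possible, with positive probability, exactly when the message is run-like on certain short windows where $w$'s assumed shift is wrong), then the MDS decoder reconstructs $\mathbf{u}$ itself and $w$ causes no conflict --- this case is harmless. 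Otherwise $w$ reads at least one symbol incorrectly: the $k/\log k-|E_w|$ read-off symbols are then fixed $GF(2)$-linear images of the message bits, the $|E_w|$ erased symbols are forced to the unique values matching $|E_w|$ of the $c$ recovered parity symbols (possible since the relevant $|E_w|\times|E_w|$ submatrix of the MDS generator is invertible), and consistency additionally demands that the remaining $c-|E_w|$ parities match; substituting the forced values turns each of these into a non-trivial $GF(q)$-linear equation in the message (non-trivial because $w$ is wrong and because an MDS generator such as a Cauchy matrix has no vanishing sub-combinations), so $Pr(w\text{ consistent})=\mathcal{O}(q^{-r})$ with $r$ tracked as a function of $c$, $|E_w|$, and how far along $w$'s shift profile first diverges. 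Grouping incorrect guesses by the number of erasures $j$, there are $\Theta\!\left((k/\log k)^{j}\right)$ of them for $1\le j\le\delta$, and the dominant contribution is $j=\delta$ with passing probability $\mathcal{O}(q^{-(c-\delta)})=\mathcal{O}(k^{\delta-c})$, yielding $Pr(F)=\mathcal{O}\!\left((k/\log k)^{\delta}\cdot k^{\delta-c}\right)=\mathcal{O}\!\left(k^{2\delta-c}/\log^{\delta}k\right)$.

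The main obstacle is making the per-guess bound uniform enough to survive this union bound. A careless estimate loses: for guesses that diverge from the truth only in a very short window, or near a block boundary, the reconstructed candidate can equal $\mathbf{u}$ with non-negligible probability (the harmless cases above), and the residual MDS equations can collapse to fewer than $c-|E_w|$ independent constraints; one must show that precisely in those collapsed situations the candidate equals $\mathbf{u}$ and causes no conflict, while otherwise the full $c-|E_w|$ constraints are active. Concretely this needs (i) fixing an MDS code for which every relevant sub-combination of the generator is non-zero --- a Reed--Solomon code with a suitable evaluation set, or a Cauchy construction --- and (ii) a bit-level argument isolating, in each active constraint, enough message bits not shared with the other constraints (the last mis-aligned window always supplies fresh bits, and a run-length computation on $\log k$ bits gives the $q^{-1}$ factor) to realize one factor of $q^{-1}$ per constraint. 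Everything else --- the redundancy count, the two complexity estimates, the deterministic recovery of the repeated parity block, and the final summation over $j$ --- is routine.
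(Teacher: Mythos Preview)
Your proof outline for (1) and (2) matches the paper closely and is correct. For (3), the high-level structure --- union bound over incorrect guesses $w$, each contributing $\mathcal{O}(q^{-(c-\delta)})$, then summing $\mathcal{O}((k/\log k)^{\delta})$ terms --- is also the paper's strategy, and your final arithmetic is right.

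The substantive difference is in how the per-guess bound is established. You frame it as: the remaining $c-|E_w|$ parity checks become ``non-trivial $GF(q)$-linear equations in the message,'' each satisfied with probability $\sim q^{-1}$, and you correctly flag that making this rigorous is the main obstacle. The paper takes a cleaner route that sidesteps the independence argument entirely: it bounds the conditional PMF of $\bm{\mathcal{Y}_i}$ directly. Conditioning on the first $\delta$ parities $p_1,\ldots,p_\delta$ (and the deletion positions), it asks, for a fixed output $\mathbf{Y}$, how many messages $\mathbf{u}$ could have produced $\bm{\mathcal{Y}_i}=\mathbf{Y}$. The answer comes from classifying message bits as ``fixed'' (forced by the read-off symbols under the wrong chunking) versus ``free'' (the deleted bits plus the bits in the assumed-erased blocks, about $\delta\log k$ of them). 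The $\delta$ parity equations on the free bits are written out over $GF(q)$; because the free bits within each erased block appear with distinct powers of $\alpha$, matching polynomial coefficients determines all but at most $\delta^2$ of them. Hence the map $\mathbf{u}\mapsto\bm{\mathcal{Y}_i}$ is at most $h(\delta)=2^{\delta^2}$-to-one, giving $Pr(\bm{\mathcal{Y}_i}=\mathbf{Y}\mid p_1,\ldots,p_\delta)\le h(\delta)/q^{k/\log k-\delta}$. Summing over $\mathbf{Y}$ in the affine subspace satisfying all $c$ parities (size $q^{k/\log k-c}$) immediately yields the per-guess bound $h(\delta)/q^{c-\delta}$.

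So your approach is not wrong, but the step you yourself identify as the obstacle --- showing the residual constraints realize $c-\delta$ genuinely independent factors of $q^{-1}$ over uniform \emph{bits}, where the read-off symbols are misaligned bit-windows rather than true $GF(q)$-symbols --- is exactly where the paper's preimage-counting argument does the work, and it does so without needing any special property of the MDS generator beyond invertibility of $\delta\times\delta$ submatrices. Your sketch of ``fresh bits in the last mis-aligned window'' is pointing at the right phenomenon, but would need to be developed into something equivalent to the paper's free-bit count to be a complete proof; as written it is a plan for the hard step rather than the step itself.
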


\noindent The result on the probability of decoding failure in Theorem~\ref{thm:1} applies for any $\delta$ deletion positions which are chosen independently of the codeword~$\mathbf{x}$. Hence, the same result can be also obtained for any random distribution over the $\delta$ deletion positions (like the uniform distribution for example), by averaging over all the possible positions of the $\delta$ deletions.

GC codes enable trade-offs between the code properties shown in Theorem~\ref{thm:1}, this will be highlighted later in Section~\ref{sec:6.1}. The code properties show the following:~(i)~the redundancy is logarithmic in $k$, and the code rate, \mbox{$R=k/(k+c(\delta+1) \log k)$}, is asymptotically optimal in $k$ (approaches $1$ as $k$ goes to infinity); (ii)~the order of complexity is polynomial in $k$ and is not affected by the constant $c$; (iii)~the probability of decoding failure goes to zero polynomially in $k$ if $c>2\delta$; and exponentially in $c$ for a fixed $k$. Note that the decoder can always detect when it cannot decode successfully. This can serve as an advantage in models which allow feedback. There, the decoder can ask for additional redundancy to be able to decode successfully.
\section{Examples} \label{sec:4} Guess \& Check (GC) codes can correct up to $\delta$ deletions with high probability. We provide examples  to illustrate the encoding and decoding schemes. The examples are for $\delta=1$ deletion just for the sake of simplicity\footnote{VT codes can correct one deletion with zero-error. However, GC codes are generalizable to multiple deletions.}.  
\begin{example}[Encoding]
\label{ex:1}
Consider a binary message $\mathbf{u}$  of length $k=16$ given by
\begin{equation*}
\mathbf{u}=1~1~1~0~0~0~0~0~1~1~0~1~0~0~0~1.
\end{equation*}
$\mathbf{u}$ is encoded by following the different encoding blocks illustrated in Fig.~\ref{fig:1}. \\
\noindent $1)$ {\em Binary to $q-$ary (Block I, Fig.~\ref{fig:1})}. The message $\mathbf{u}$ is chunked into adjacent blocks of length $\log k=4$ bits each,
\begin{equation*}
\mathbf{u}=\underbrace{\overbracket{1~1~1~0}^{\text{block 1}}}_{\alpha^{11}}~\underbrace{\overbracket{0~0~0~0}^{\text{block 2}}}_{0}~\underbrace{\overbracket{1~1~0~1}^{\text{block 3}}}_{\alpha^{13}}~\underbrace{\overbracket{0~0~0~1}^{\text{block 4}}}_{1}\color{black}.
\end{equation*}
Each block is then mapped to its corresponding symbol in $GF(q)$, $q=k=2^4=16$, by considering its leftmost bit as its most significant bit. This results in a string $\mathbf{U}$ which consists of $k/\log k=4$ symbols in $GF(16)$. The extension field used here has a primitive element $\alpha$, with $\alpha^4=\alpha+1$. Hence, $\mathbf{U}\in GF(16)^4$ is given by
\begin{equation*}
\mathbf{U}=(\alpha^{11},0,\alpha^{13},1\color{black}).
\end{equation*}
$2)$ {\em Systematic MDS code (Block II, Fig.~\ref{fig:1})}. $\mathbf{U}$ is then coded using a systematic $(k/\log k +c,k/\log k)=(6,4)$ MDS code over $GF(16)$, with $c=2>\delta$. The encoded string is denoted by $\mathbf{X} \in GF(16)^6$ and is given by multiplying $\mathbf{U}$ by the following code generator matrix,
\begin{align*}
\mathbf{X} &= \left(\alpha^{11},0,\alpha^{13},1\color{black}\right)
\left(
\begin{matrix} \color{blue}
\color{blue} 1 & \color{blue}0 & \color{blue}0 & \color{blue}0 & \color{red}1 & \color{red}1\\
\color{blue}0 & \color{blue}1 & \color{blue}0 & \color{blue}0 & \color{red}1 & \color{red}\alpha\\
\color{blue}0 & \color{blue}0 & \color{blue}1 & \color{blue}0 & \color{red}1 & \color{red}\alpha^2\\
\color{blue}0 & \color{blue}0 & \color{blue}0 & \color{blue}1 & \color{red}1 & \color{red}\alpha^3 \\
\end{matrix} \right), \\ 
&= (\color{blue} \alpha^{11},0,\alpha^{13},1,\color{red} \alpha,\alpha^{10} \color{black}).
\end{align*}
$3)$ {\em $Q-$ary to binary (Block III, Fig.~\ref{fig:1})}. The binary codeword corresponding to $\mathbf{X}$, of length $n=k+2\log k=24$ bits, is
\begin{equation*}
\mathbf{x}=\color{blue} \underbrace{1~1~1~0}~\underbrace{0~0~0~0}~\underbrace{1~1~0~1}~\underbrace{0~0~0~1}~\color{red}\underbrace{\color{red}0~0~1~0}~\color{red}\underbrace{\color{red} 0~1~1~1 }\color{black}.
\end{equation*}
For simplicity we skip the last encoding step (Block IV) intended to protect the parity bits and assume  that deletions affect only the systematic bits.
\end{example}
The high level idea of the decoding algorithm is to: (i) make an assumption on in which block the bit deletion has occurred (the guessing part); (ii) chunk the bits accordingly, treat the affected block as erased, decode the erasure and check whether the obtained sequence is consistent with the parities (the checking part); (iii) go over  all the possibilities.
\begin{example}[Successful Decoding] \label{ex:1a}
Suppose that the $14^{th}$ bit of $\mathbf{x}$ gets deleted,
\begin{equation*}
\mathbf{x}=\color{blue}1~1~1~0~0~0~0~0~1~1~0~1~0~\textbf{\underline{0}}~0~1~\color{red}0~0~1~0~0~1~1~1\color{black}.
\end{equation*}
The decoder receives the following $23$ bit string $\mathbf{y}$,
\begin{equation*}
\mathbf{y}=\color{blue}1~1~1~0~0~0~0~0~1~1~0~1~0~0~1~\color{red}0~0~1~0~0~1~1~1\color{black}.
\end{equation*}
The decoder goes through all the possible $k/\log k=4$ cases, where in each case $i$, $i=1,\ldots,4$, the deletion is assumed to have occurred in block $i$ and $\mathbf{y}$ is chunked accordingly. Given this assumption, symbol $i$ is considered erased and erasure decoding is applied over $GF(16)$ to recover this symbol. Furthermore, given two parities, each symbol $i$ can be recovered in two different ways. Without loss of generality, we assume that the first parity $p_1$, $p_1=\alpha$, is the parity used for decoding the erasure. The decoded $q-$ary string in case $i$ is denoted by $\mathbf{Y_i}\in GF(16)^4$, and its binary representation is denoted by $\mathbf{y_i}\in GF(2)^{16}$. The four cases are shown below: \\
{\em \underline{Case 1}:} The deletion is assumed to have occurred in block 1, so $\mathbf{y}$ is chunked as follows and the erasure is denoted by  $\mathcal{E}$,
\begin{equation*}
\color{blue} \underbrace{1~1~1}_{\mathcal{E}}~\underbrace{0~0~0~0}_{0}~\underbrace{0~1~1~0}_{\alpha^{5}}~\underbrace{1~0~0~1}_{\alpha^{14}}~\color{red}\underbrace{0~0~1~0}_{\alpha}~\underbrace{0~1~1~1}_{\alpha^{10}}\color{black}.
\end{equation*}
Applying erasure decoding over $GF(16)$, the recovered value of symbol 1 is $\alpha^{13}$. Hence, the decoded $q-$ary string \mbox{$\mathbf{Y_1} \in GF(16)^4$} is
\begin{equation*}
\mathbf{Y_1}=(\color{blue}\alpha^{13},0,\alpha^{5},\alpha^{14}\color{black}) .
\end{equation*}
Its equivalent in binary $\mathbf{y_1}\in GF(2)^{16}$ is
\begin{equation*}
\mathbf{y_1} = \color{blue}\underbrace{1~1~0~1}_{\alpha^{13}}~\underbrace{0~0~0~0}_{0}~\underbrace{0~1~1~0}_{\alpha^{5}}~\underbrace{1~0~0~1}_{\alpha^{14}}\color{black}.
\end{equation*}
Now, to check  our assumption, we test whether $\mathbf{Y_1}$ is consistent with the second parity $p_2=\alpha^{10}$. However, the computed parity is 
\begin{equation*}
\left(\color{blue}\alpha^{13},0,\alpha^{5},\alpha^{14}\color{black}\right)\left(1,\alpha,\alpha^{2},\alpha^{3}\right)^{T}=\alpha \neq \color{red} \alpha^{10}.
\end{equation*}
This shows that $\mathbf{Y_1}$ does not satisfy the second parity. Therefore, we deduce that our assumption on the deletion location  is wrong, i.e., the deletion did not occur in block $1$. Throughout the paper we refer to such cases as {\em impossible} cases.\\
{\em \underline{Case 2}:}  The deletion is assumed to have occurred in block 2, so the sequence is chunked as follows
\begin{equation*}
\color{blue} \underbrace{1~1~1~0}_{\alpha^{11}}~\underbrace{0~0~0}_{\mathcal{E}}~\underbrace{0~1~1~0}_{\alpha^{5}}~\underbrace{1~0~0~1}_{\alpha^{14}}~\color{red}\underbrace{0~0~1~0}_{\alpha}~\underbrace{0~1~1~1}_{\alpha^{10}}\color{black}.
\end{equation*}
Applying erasure decoding, the recovered value of symbol 2 is $\alpha^4$. Now, before checking whether the decoded string is consistent with the second parity $p_2$, one can notice that the binary representation of the decoded erasure $(0011)$ is not a supersequence of the sub-block $(000)$. So, without checking $p_2$, we can deduce that this case is {\em impossible}. 
\begin{definition} \label{def:1} We restrict this definition to the case of $\delta=1$ deletion with two MDS parity symbols in $GF(q)$. A case $i$, $i=1,2,\ldots,k/\log k$, is said to be {\em possible} if it satisfies the two criteria below simultaneously. \\
Criterion 1: The $q-$ary string decoded based on the first parity in case $i$, denoted by $\mathbf{Y_i}$, satisfies the second parity. \\
Criterion 2: The binary representation of the decoded erasure is a supersequence of its corresponding sub-block. \\
If any of the two criteria is not satisfied, the case is said to be {\em impossible}.
\end{definition}
\noindent The two criteria mentioned above are both necessary. For instance, in this example, case 2 does not satisfy Criterion 2 but it is easy to verify that it satisfies Criterion 1. Furthermore, case 1 satisfies Criterion 1 but does not satisfy Criterion 2. A case is said to be {\em possible} if it satisfies both criteria simultaneously. \\
\noindent {\em \underline{Case 3}:} The deletion is assumed to have occurred in block 3, so the sequence is chunked as follows
\begin{equation*}
\color{blue} \underbrace{1~1~1~0}_{\alpha^{11}}~\underbrace{0~0~0~0}_{0}~\underbrace{1~1~0}_{\mathcal{E}}~\underbrace{1~0~0~1}_{\alpha^{14}}~\color{red}\underbrace{0~0~1~0}_{\alpha}~\underbrace{0~1~1~1}_{\alpha^{10}}\color{black}.
\end{equation*}
In this case, the decoded binary string is
\begin{equation*}
\mathbf{y_3} = \color{blue}\underbrace{1~1~1~0}_{\alpha^{11}}~\underbrace{0~0~0~0}_{0}~\underbrace{0~1~0~1}_{\alpha^{8}}~\underbrace{1~0~0~1}_{\alpha^{14}}~\color{black}.
\end{equation*}
By following the same steps as cases 1 and 2, it is easy to verify that both criteria are not satisfied in this case, i.e., case 3 is also {\em impossible}. \\
\noindent {\em \underline{Case 4}:} The deletion is assumed to have occurred in block 4, so the sequence is chunked as follows
\begin{equation*}
\color{blue} \underbrace{1~1~1~0}_{\alpha^{11}}~\underbrace{0~0~0~0}_{0}~\underbrace{1~1~0~1}_{\alpha^{13}}~\underbrace{0~0~1}_{\mathcal{E}}~\color{red}\underbrace{0~0~1~0}_{\alpha}~\underbrace{0~1~1~1}_{\alpha^{10}}\color{black}.
\end{equation*}
In this case, the decoded binary string is 
\begin{equation*}
\mathbf{y_4} = \color{blue}\underbrace{1~1~1~0}_{\alpha^{11}}~\underbrace{0~0~0~0}_{0}~\underbrace{1~1~0~1}_{\alpha^{13}}~\underbrace{0~0~0~1}_{1}~\color{black}.
\end{equation*}
Here, it is easy to verify that this case satisfies both criteria and is indeed {\em possible}. \\ 
After going through all the cases, case 4 stands alone as the only {\em possible} case. So the decoder declares successful decoding and outputs $\mathbf{y_4}$ ($\mathbf{y_4}=\mathbf{u}$). 
\end{example}

\noindent The next example considers another message $\mathbf{u}$ and shows how the proposed decoding scheme can lead to a decoding failure. The importance of Theorem~\ref{thm:1} is that it shows that the probability of a decoding failure vanishes a $k$ goes to infinity.

\begin{example}[Decoding failure]
\label{ex:2}
Let $k=16, \delta=1$ and $c=2$. Consider the binary message $\mathbf{u}$ given by
\begin{equation*}
\mathbf{u}=1~1~0~1~0~0~0~0~1~0~0~0~0~1~0~1\color{black}.
\end{equation*}
Following the same encoding steps as before, the $q-$ary codeword $\mathbf{X}\in GF(16)^6$ is given by
\begin{equation*}
\mathbf{X}=(\color{blue}\alpha^{13},0,\alpha^{3},\alpha^{8},\color{red}0,\alpha^{8}\color{black}).
\end{equation*}
We still assume that the deletion affects only the systematic bits. Suppose that the $14^{th}$ bit of the binary codeword \mbox{$\mathbf{x}\in GF(2)^{24}$} gets deleted
\begin{equation*}
\mathbf{x}=\color{blue} 1~1~0~1~0~0~0~0~1~0~0~0~0~\textbf{\underline{1}}~0~1~\color{red}0~0~0~0~0~1~0~1 \color{black}.
\end{equation*}
The decoder receives the following $23$ bit binary string \mbox{$\mathbf{y}\in GF(2)^{23}$},
\begin{equation*}
\mathbf{y}=\color{blue} 1~1~0~1~0~0~0~0~1~0~0~0~0~0~1~\color{red}0~0~0~0~0~1~0~1 \color{black}.
\end{equation*}
The decoding is carried out as explained in Example~\ref{ex:1a}. The $q-$ary strings decoded in case $1$ and case $4$ are
\begin{align*}
\mathbf{Y_1}&=(\color{blue}\alpha^{13},\alpha^{3},\alpha^{2},1\color{black}), \\
\mathbf{Y_4}&=(\color{blue}\alpha^{13},0,\alpha^{3},\alpha^{8}\color{black}).
\end{align*}
It is easy to verify that both cases $1$ and $4$ are {\em possible} cases. The decoder here cannot know which of the two cases is the correct one, so it declares a decoding failure.
\end{example}
\begin{remark}
In the previous analysis, each case refers to the assumption that a certain block is affected by the deletion. Hence, among all the cases considered, there is only one correct case that corresponds to the actual deletion location. That correct case always satisfies the two criteria for {\em possible} cases (Definition~\ref{def:1}). So whenever there is only one {\em possible} case (like in Example~\ref{ex:1a}), the decoding will be successful since that case would be for sure the correct one. However, in general, the analysis may yield multiple {\em possible} cases. Nevertheless, the decoding can still be successful if all these {\em possible} cases lead to the same decoded string. An example of this is when the transmitted codeword is the all $0$'s sequence. Regardless of the deletion position, this sequence will be decoded as all $0$'s in all the cases. In fact, whenever the deletion occurs within a run of $0$'s or $1$'s that extends to multiple blocks, the cases corresponding to these blocks will all be {\em possible} and lead to the same decoded string. However, sometimes the {\em possible} cases can lead to different decoded strings like in Example~\ref{ex:2}, thereby causing a decoding failure. 
\end{remark}
\section{General Encoding and Decoding of GC Codes}
\label{sec:5}
In this section, we describe the general encoding and decoding schemes that can correct up to $\delta$ deletions. The encoding and decoding steps for $\delta>1$ deletions are a direct generalization of the steps for $\delta=1$ described in the previous section. For decoding, we assume without loss of generality that exactly $\delta$ deletions have occurred. Therefore, the length of the binary string $\mathbf{y}$ received by the decoder is $n-\delta$ bits.
\subsection{Encoding Steps}
\noindent The encoding steps follow from the block diagram that was shown in Fig.~\ref{fig:1}. \\
\noindent $1)$ {\em Binary to $q-$ary (Block I, Fig.~\ref{fig:1})}. The message $\mathbf{u}$ is chunked into adjacent blocks of length $\log k$ bits each. Each block is then mapped to its corresponding symbol in $GF(q)$, where $q=2^{\log k}=k$. This results in a string $\mathbf{U}$ which consists of $k/\log k$ symbols in $GF(q)$ \footnote{After chunking, the last block may contain fewer than $\log k$ bits. In order to map the block to its corresponding symbol, it is first padded with zeros to a length of $\log k$ bits. Hence, $\mathbf{U}$ consists of $\left \lceil k/\log k \right \rceil$ symbols. We drop the ceiling notation throughout the paper and simply write $k/\log k$.}.  \\
\noindent $2)$ {\em Systematic MDS code (Block II, Fig.~\ref{fig:1})}. $\mathbf{U}$ is then coded using a systematic $(k/\log k + c, k/\log k)$ MDS code over $GF(q)$, where $c>\delta$ is a code parameter. The $q-$ary codeword $\mathbf{X}$ consists of $k/\log k+c$ symbols in $GF(q)$. \\
\noindent $3)$ {\em $Q-$ary to binary (Block III, Fig.~\ref{fig:1})}. The binary representations of the symbols in $\mathbf{X}$ are concatenated respectively. \\
\noindent $4)$ {\em Coding parity bits by repetition (Block IV, Fig.~\ref{fig:1})}. Only the parity bits are coded using a $(\delta+1)$ repetition code, i.e., each bit is repeated $(\delta+1)$ times. The resulting binary codeword $\mathbf{x}$ to be transmitted is of length $n= k+c\left(\delta +1\right)\log k$.
\subsection{Decoding Steps}
\noindent $1)$ Decoding the parity symbols of Block II (Fig.~\ref{fig:1}): these parities are protected by a $(\delta+1)$ repetition code, and therefore can be always recovered correctly by the decoder. A simple way to do this is to examine the bits of $\mathbf{y}$ from right to left and decode deletions instantaneously until the total length of the decoded sequence is $c(\delta +1)$ bits (the original length of the coded parity bits). Therefore, for the remaining steps we will  assume without loss of generality that all the $\delta$ deletions have occurred in the systematic bits.\\
\noindent $2)$ The guessing part: the number of possible ways to distribute the $\delta$ deletions among the $k/\log k$ blocks is
\begin{equation*}
t \triangleq \binom{k/\log k+\delta-1}{\delta}.
\end{equation*}
We index these possibilities by $i,i=1,\ldots,t,$ and refer to each possibility by case $i$. \\
\noindent The decoder goes through all the $t$ cases (guesses). \\
\noindent $3)$ The checking part: for each case $i$, $i=1,\ldots,t$, the decoder (i) chunks the sequence according to the corresponding assumption; (ii) considers the affected blocks erased and maps the remaining blocks to their corresponding symbols in $GF(q)$; (iii) decodes the erasures using the first $\delta$ parity symbols; (iv)~checks whether the case is {\em possible} or not based on the criteria described below.
\begin{definition} \label{def:4} For $\delta$ deletions, a case $i$, $i=1,2,\ldots,t$, is said to be {\em possible} if it satisfies the two criteria below simultaneously. \\
Criterion 1: the decoded $q-$ary string in case $i$, denoted by $\mathbf{Y_i} \in GF(q)^{k/\log k}$, satisfies the last $c-\delta$ parities simultaneously. \\
Criterion 2: The binary representations of all the decoded erasures in $\mathbf{Y_i}$ are supersequences of their corresponding sub-blocks. \\
If any of these two criteria is not satisfied, the case is said to be {\em impossible}.
\end{definition}
\noindent $4)$ After going through all the cases, the decoder declares successful decoding if (i) only one {\em possible} case exists; or (ii)~multiple {\em possible} cases exist but all lead to the same decoded string. Otherwise, the decoder declares a decoding failure.
\section{Proof of Theorem~\ref{thm:1}}
\label{sec:6}
\subsection{Redundancy}
The $(k/\log k +c, k/\log k)$ $q-$ary MDS code in step~$2$ of the encoding scheme adds a redundancy of $c\log k$ bits. These $c\log k$ bits are then coded using a $(\delta+1)$ repetition code. Therefore, the overall redundancy of the code is $c(\delta+1)\log k$ bits. 
\subsection{Complexity} 
{\em i) Encoding Complexity}: The complexity of mapping a binary string to its corresponding $q-$ary symbol (step~$1$), or vice versa (step~$3$), is $\mathcal{O}(k)$. Furthermore, the encoding complexity of a $(k/\log k +c, k/\log k)$ $q-$ary systematic MDS code is quantified by the complexity of computing the $c$ MDS parity symbols. Computing one MDS parity symbol involves $k/\log k$ multiplications of symbols in $GF(q)$. The complexity of multiplying two symbols in $GF(q)$ is $\mathcal{O}(\log^2 q)$. Recall that in our code $q=k$. Therefore, the complexity of step~$2$ is $\mathcal{O}(\log^2 k \cdot c \cdot k/\log k)= \mathcal{O}(c \cdot k\log k)$. Step~$4$ in the encoding scheme codes $c\log k$ bits by repetition, its complexity is $\mathcal{O}(c\log k)$. Therefore, the encoding complexity of GC codes is $\mathcal{O}(c \cdot k\log k)=\mathcal{O}(k\log k)$ since $c>\delta$ is a constant.

{\em ii) Decoding Complexity}: The computationally dominant step in the decoding algorithm is step~$3$, that goes over all the $t$ cases and decodes the erasures in each case\footnote{The complexity of checking whether a decoded erasure, of length $\log k$ bits, is a supersequence of its corresponding sub-block is $\mathcal{O}(\log^2 k)$ using the Wagner-Fischer algorithm. Hence, Criterion 2 (Definition~\ref{def:4}) does not affect the order of decoding complexity.}. Since the erasures to be decoded are within the systematic bits, then decoding $\delta$ erasures in each case can be done by performing the following steps: (a)~multiplying the unerased systematic symbols by the corresponding $\delta$ MDS encoding vectors; (b)~subtracting the obtained results from the values of the corresponding MDS parities; (c)~inverting a $\delta \times \delta$ matrix. Since $\delta$ is a constant, the complexity of the previous steps is $\mathcal{O}(\log^2 q\cdot k/\log k)=\mathcal{O}(k\log k)$. Now since the erasure decoding is performed for all the $t$ cases, the total decoding complexity is $\mathcal{O}\left(t \cdot k\log k\right)$.
The number of cases $t$ is given by
\begin{equation}
\binom{k/\log k+\delta-1}{\delta} = \mathcal{O} \left(\frac{k^{\delta}}{\log^{\delta} k}\right). \label{bin}
\end{equation}
\eqref{bin} follows from the fact that $\binom{b}{a}\leq b^a$ for all integer values of $a$ and $b$ such that $1\leq a \leq b$. Therefore, the overall decoding complexity is
\begin{equation*}
\mathcal{O}\left(\frac{k^{\delta+1}}{\log^{\delta-1} k}\right),
\end{equation*}
which is polynomial in $k$ for constant $\delta$. 
\subsection{Proof of the probability of decoding failure for $\delta=1$}
To prove the upper bound on the probability of decoding failure, we first introduce the steps of the proof for $\delta=1$ deletion. Then, we generalize the proof to the case of $\delta>1$. 

\vspace{0.1cm} 

The probability of decoding failure for $\delta=1$ is computed over all possible $k-$bit messages. Recall that the bits of the message $\mathbf{u}$ are iid Bernoulli$(1/2)$. The message $\mathbf{u}$ is encoded as shown in Fig.~\ref{fig:1}. For $\delta=1$, the decoder goes through a total of $k/\log k$ cases, where in a case $i$ it decodes by assuming that block $i$ is affected by the deletion. Let $\bm{\mathcal{Y}_i}$ be the random variable representing the $q-$ary string decoded in case \mbox{$i$, $i=1,2,\ldots,k/\log k$}, in step 3 of the decoding scheme. Let $\mathbf{Y} \in GF(q)^{k/\log k}$ be a realization of the random variable $\bm{\mathcal{Y}_i}$.
We denote by $\mathcal{P}_r \in GF(q), r=1,2,\ldots,c,$ the random variable representing the $r^{th}$ MDS parity symbol (Block II, Fig.~\ref{fig:1}). Also, let $\mathbf{G_r} \in GF(q)^{k/\log k}$ be the MDS encoding vector responsible for generating $\mathcal{P}_r$. Consider $c>\delta$ arbitrary MDS parities $p_1,\ldots,p_c$, for which we define the following sets. For $r=1,\ldots,c,$
\begin{align*}
\mathrm{A_r} &\triangleq \{ \mathbf{Y} \in GF(q)^{k/\log k} |~\mathbf{G_r^TY}=p_r\}, \\
\mathrm{A} &\triangleq \mathrm{A_1} \cap \mathrm{A_2} \cap \ldots \cap \mathrm{A_c}.
\end{align*}
$\mathrm{A_r}$ and $\mathrm{A}$ are affine subspaces of dimensions $k/\log k -1$ and $k/\log k-c$, respectively. Therefore,
\begin{equation}
\label{e:A}
\left \lvert \mathrm{A_r} \right \rvert = q^{\frac{k}{\log k}-1} \text{ and} ~ \left \lvert \mathrm{A} \right \rvert = q^{\frac{k}{\log k}-c}. 
\end{equation}
Recall that the correct values of the MDS parities are recovered at the decoder, and that for $\delta=1$, $\bm{\mathcal{Y}_i}$ is decoded based on the first parity. Hence, for a fixed MDS parity $p_1$, and for $\delta=1$ deletion, $\bm{\mathcal{Y}_i}$ takes values in $\mathrm{A_1}$. Note that $\bm{\mathcal{Y}_i}$ is not necessarily uniformly distributed over $\mathrm{A_1}$. For instance, if the assumption in case $i$ is wrong, two different message inputs can generate the same decoded string $\bm{\mathcal{Y}_i} \in \mathrm{A_1}$.  We illustrate this later through Example~\ref{ex:3}. The next claim gives an upper bound on the probability mass function of  $\bm{\mathcal{Y}_i}$ for $\delta=1$ deletion. 
\begin{claim} 
\label{claim:1}
For any case $i$, $i=1,2,\ldots,k/\log k$,
\begin{equation*}
Pr\left(\bm{\mathcal{Y}_i}=\mathbf{Y} | \mathcal{P}_1=p_1 \right) \leq \frac{2}{q^{\frac{k}{\log k}-1}}.
\end{equation*}
\end{claim}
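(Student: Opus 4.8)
The plan is to bound, for each fixed target string $\mathbf{Y} \in \mathrm{A_1}$, the number of message inputs $\mathbf{u}$ (equivalently $q$-ary strings $\mathbf{U}$) that, together with the fixed deletion position, produce $\bm{\mathcal{Y}_i} = \mathbf{Y}$ under the case-$i$ chunking rule; then convert this count into a probability bound using the fact that $\mathbf{u}$ is uniform iid and that conditioning on $\mathcal{P}_1 = p_1$ restricts the sample space to the $q^{k/\log k - 1}$ messages whose first MDS parity equals $p_1$. Since $\bm{\mathcal{Y}_i}$ always lands in $\mathrm{A_1}$ once we condition on $\mathcal{P}_1 = p_1$ (the decoder uses the first parity to fill the erasure, so the output automatically satisfies $\mathbf{G_1^T Y} = p_1$), the claimed bound $2/q^{k/\log k - 1}$ is exactly the statement that at most $2$ distinct parity-$p_1$ messages map to any given $\mathbf{Y}$.

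First I would set up the case analysis on the true deletion location versus the guessed block $i$. If the guess is correct, the decoder recovers $\mathbf{U}$ exactly (erasure decoding with the correct parity is exact), so there is exactly one message consistent with $(\mathbf{Y}, \text{case } i)$ in this sub-case — contributing $1$. If the guess is wrong, I need to argue that the set of messages $\mathbf{u}$ whose true deletion (at the fixed position) yields the case-$i$ decoded string $\mathbf{Y}$ has size at most $1$ as well, so the total over both sub-cases is at most $2$. The key structural observation is that under the case-$i$ chunking, all blocks except block $i$ are read directly off $\mathbf{y}$; given $\mathbf{Y}$, these fixed blocks plus the erasure-decoded symbol $Y_i$ determine the entire received-and-rechunked string, hence determine $\mathbf{y}$ itself (as a bit string) up to the known structure, and then — crucially — the actual deletion position being fixed means that $\mathbf{x}$ (and therefore $\mathbf{u}$) is obtained from $\mathbf{y}$ by inserting one bit at a known location, but the value of that inserted bit and the true block boundaries may not be pinned down by $\mathbf{Y}$ alone. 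I would show the ambiguity is at most a factor of $2$: the pre-image is controlled by the single inserted bit (two choices), and once that bit is chosen the whole of $\mathbf{u}$ is forced because the original (correct) chunking boundaries are fixed and the remaining bits of $\mathbf{y}$ are copied verbatim.

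The main obstacle I anticipate is making the "wrong guess" counting airtight: I must carefully track how the misaligned chunking in case $i$ interacts with the correct chunking of $\mathbf{u}$, and verify that distinct parity-$p_1$ messages producing the same $\mathbf{Y}$ can differ only through this single-bit ambiguity — in particular ruling out that a third message could sneak in via a different true-versus-guessed block configuration (this is why the bound is $2$ and not $1$, but also why it is not larger). I expect an example (the paper signals Example~\ref{ex:3}) showing the factor $2$ is genuinely attained, which both motivates and sanity-checks the bound. Once the counting lemma is in hand, the probability statement follows immediately: $Pr(\bm{\mathcal{Y}_i} = \mathbf{Y} \mid \mathcal{P}_1 = p_1) = (\#\text{messages mapping to }\mathbf{Y}) / |\mathrm{A_1}| \le 2 / q^{k/\log k - 1}$ by \eqref{e:A} and uniformity.
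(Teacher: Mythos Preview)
Your overall strategy --- count, for a fixed deletion position $d$ and fixed parity $p_1$, how many messages $\mathbf{u}$ produce a given decoded output $\mathbf{Y}$ under case~$i$, then divide by $|\mathrm{A_1}|$ --- matches the paper's exactly, and the target count of $2$ is correct. But your justification for that count has a real gap. You assert that the non-erased blocks $Y_j$ ($j\neq i$) together with the decoded erasure $Y_i$ ``determine $\mathbf{y}$ itself,'' so that the only remaining freedom is the value of the single inserted bit. This is false: the decoded symbol $Y_i$ is a $\log k$-bit value computed from $p_1$ and the other $Y_j$'s; it is \emph{not} the $(\log k-1)$-bit sub-block that actually sits at position $i$ in $\mathbf{y}$. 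Knowing $\mathbf{Y}$ therefore leaves those $\log k-1$ sub-block bits of $\mathbf{y}$ undetermined, and in total there are $\log k$ free bits of $\mathbf{u}$ (the deleted bit $u_d$ plus the $\log k-1$ bits of $\mathbf{u}$ that land in the sub-block position), not one. Your step ``once that bit is chosen the whole of $\mathbf{u}$ is forced because \ldots\ the remaining bits of $\mathbf{y}$ are copied verbatim'' thus does not go through as stated.

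The paper closes this gap by writing the parity constraint $\sum_j U_j = p_1$ explicitly in the polynomial basis of $GF(q)$ and observing that, among the $\log k$ free bits, exactly two land on the same power of $\alpha$ (the deleted bit $u_d$ in block $j$ and one bit from block $i$ share an exponent), while the remaining free bits occupy distinct powers; hence the resulting binary system has at most $2$ solutions. Your intuition can in fact be repaired more directly: once $u_d$ is fixed (two choices), every true block $U_{j'}$ with $j'\neq i$ is fully determined from the $Y_j$'s together with $u_d$, so the parity equation pins down $U_i$ uniquely --- and one then checks consistency with the single bit of $U_i$ that was already fixed by the chunking. Either way, the essential work is in invoking the parity constraint on the free bits, not in claiming that $\mathbf{y}$ is already known. (A smaller point: for a \emph{fixed} deletion position $d$, case $i$ is either correct or wrong, never both, so your ``total over both sub-cases is at most $2$'' is mis-framed; the bound is simply $\le 1$ in the correct case and $\le 2$ in the wrong case, and the paper then averages over $d$.)
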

\noindent  Claim~\ref{claim:1} can be interpreted as that at most $2$ different input messages can generate the same decoded string $\bm{\mathcal{Y}_i} \in \mathrm{A_1}$. We assume Claim~\ref{claim:1} is true for now and prove it later in Section~\ref{clm1}. Next, we use this claim to prove the following upper bound on the probability of decoding failure for $\delta=1$,
\begin{equation}
Pr(F) < \frac{2}{k^{c-2}~\log k}. \label{d:1}
\end{equation}
\noindent In the general decoding scheme, we mentioned two criteria which determine whether a case is {\em possible} or not (Definition~\ref{def:4}). Here, we upper bound $Pr(F)$ by taking into account Criterion~$1$ only. Based on Criterion~$1$, if a case $i$ is possible, then $\bm{\mathcal{Y}_i}$ satisfies all the $c$ MDS parities simultaneously, i.e., $\bm{\mathcal{Y}_i} \in \mathrm{A}$. Without loss of generality, we assume case~$1$ is the correct case, i.e., the deletion occurred in block~$1$. A decoding failure is declared if there exists a {\em possible} case~\mbox{$j$, $j=2,\ldots,k/\log k$}, that leads to a decoded string different than that of case~$1$. Namely, $\bm{\mathcal{Y}_j} \in \mathrm{A}$ and $\bm{\mathcal{Y}_j} \neq \bm{\mathcal{Y}_1}$. Therefore,
\begin{align}
Pr\left(F | \mathcal{P}_1=p_1 \right) &\leq Pr\left(\bigcup_{j=2}^{k/\log k}\{\bm{\mathcal{Y}_j}\in \mathrm{A},\bm{\mathcal{Y}_j} \neq \bm{\mathcal{Y}_1}\} \biggr\rvert \mathcal{P}_1=p_1 \right) \\
&\leq \sum_{j=2}^{k/\log k} Pr\left(\bm{\mathcal{Y}_j}\in \mathrm{A},\bm{\mathcal{Y}_j} \neq \bm{\mathcal{Y}_1} | \mathcal{P}_1=p_1 \right) \label{e:2} \\
&\leq \sum_{j=2}^{k/\log k} Pr\left(\bm{\mathcal{Y}_j}\in \mathrm{A} | \mathcal{P}_1=p_1 \right) \label{e:4}\\
&= \sum_{j=2}^{k/\log k} \sum_{\mathbf{Y}\in \mathrm{A}} Pr\left(\bm{\mathcal{Y}_j}=\mathbf{Y} | \mathcal{P}_1=p_1 \right) \\
&\leq \sum_{j=2}^{k/\log k}  \sum_{\mathbf{Y}\in \mathrm{A}} \frac{2}{q^{\frac{k}{\log k}-1}} \label{e:6}\\
&= \sum_{j=2}^{k/\log k}  \left \lvert \mathrm{A} \right \rvert \frac{2}{q^{\frac{k}{\log k}-1}}  \\
&= \sum_{j=2}^{k/\log k}  2~\frac{q^{\frac{k}{\log k} -c}}{q^{\frac{k}{\log k}-1}} \label{e:8} \\
&=  \left(\frac{k}{\log k}-1\right)\frac{2}{q^{c-1}} \\
&< \frac{2}{k^{c-2}~\log k} \label{e:10}.
\end{align}
\noindent \eqref{e:2} follows from applying the union bound. \eqref{e:4} follows from the fact that $Pr\left(\bm{\mathcal{Y}_j} \neq \bm{\mathcal{Y}_1}|\bm{\mathcal{Y}_j}\in \mathrm{A}, \mathcal{P}_1=p_1 \right)\leq 1$. \eqref{e:6} follows from Claim~\ref{claim:1}. \eqref{e:8} follows from \eqref{e:A}. \eqref{e:10} follows from the fact that $q=k$ in the coding scheme. Next, to complete the proof of \eqref{d:1}, we use \eqref{e:10} and average over all values of $p_1$.
\begin{align*}
Pr(F) &= \sum_{p_1\in GF(q)} Pr\left(F|\mathcal{P}_1=p_1\right)Pr\left(\mathcal{P}_1=p_1\right)  \\
&< \sum_{p_1\in GF(q)} \frac{2}{k^{c-2}~\log k}~Pr\left(\mathcal{P}_1=p_1\right) \\
&= \frac{2}{k^{c-2}~\log k}.
\end{align*}
\subsection{Proof of the probability of decoding failure for $\delta$ deletions}
We now generalize the previous proof for $\delta>1$ deletions and show that
\begin{equation*}
Pr(F) = \mathcal{O} \left( \frac{1}{k^{c-2\delta}~\log^{\delta} k} \right).
\end{equation*}
For $\delta$ deletions the number of cases is given by
\begin{equation}
\label{ttt}
t=\binom{k/\log k+\delta-1}{\delta} = \mathcal{O} \left(\frac{k^{\delta}}{\log^{\delta} k}\right).
\end{equation}
Consider the random variable $\bm{\mathcal{Y}_i}$ which represents the $q-$ary string decoded in case $i$, $i=1,2,\ldots,t$. The next claim generalizes Claim~\ref{claim:1} for $\delta>1$ deletions. 
\begin{claim} 
\label{claim:2}
There exists a deterministic function $h$ of $\delta$, $h(\delta)$ independent of $k$, such that for any case $i$, $i=1,2,\ldots,t$,
\begin{equation*}
Pr\left(\bm{\mathcal{Y}_i}=\mathbf{Y} | \mathcal{P}_1=p_1,\ldots,\mathcal{P}_{\delta}=p_{\delta}\right) \leq \frac{h(\delta)}{q^{\frac{k}{\log k}-\delta}}.
\end{equation*}
\end{claim}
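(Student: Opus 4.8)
The plan is to generalize, essentially verbatim, the structure behind Claim~\ref{claim:1}: reduce the statement to a counting problem, analyze combinatorially how the decoding equation constrains the message, and then control the count with the MDS property.

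\emph{Setup and reduction.} Condition on $\mathcal{P}_1=p_1,\ldots,\mathcal{P}_{\delta}=p_{\delta}$. Since the bit-to-symbol map is a bijection and the message bits are uniform i.i.d., $\mathbf{U}$ is uniform over $GF(q)^{k/\log k}$, and since $\mathbf{U}\mapsto(\mathcal{P}_1,\ldots,\mathcal{P}_{\delta})$ is a surjective $GF(q)$-linear map, conditioned on its value $\mathbf{U}$ is uniform over an affine subspace of size $q^{k/\log k-\delta}$. Hence it suffices to produce a constant $h(\delta)$, independent of $k$, such that for every case $i$ and every $\mathbf{Y}\in GF(q)^{k/\log k}$ the number of messages in this subspace whose case-$i$ decoded string equals $\mathbf{Y}$ is at most $h(\delta)$; the bound then follows by dividing by $q^{k/\log k-\delta}$. (If $\mathbf{Y}$ violates one of the parities used by the case-$i$ decoder this count is $0$, so assume it does not.)

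\emph{Constraint analysis.} Using only Criterion~$1$, as in the $\delta=1$ proof, fixing case $i$ fixes how the $\delta$ guessed deletions are distributed over the blocks; comparing this with the fixed true deletion positions partitions the $k/\log k$ blocks into at most $\delta$ \emph{affected regions}, each containing the same number of guessed and true deletions, the chunking being aligned (the $m$-th chunk equals $u_m$) on every block outside these regions. On an aligned, unerased block $\bm{\mathcal{Y}_i}=\mathbf{Y}$ forces $u_m=Y_m$. Inside a region the case-$i$ chunking merely re-partitions a single contiguous bit string — the bits of that region's symbols after the true deletions — so matching all unerased chunks of the region to the corresponding coordinates of $\mathbf{Y}$ pins every bit of the region except those lying inside its (at most $\delta$) erased chunks, and the only further freedom is the values of the $\delta$ deleted bits. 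Altogether, $\bm{\mathcal{Y}_i}(\mathbf{u})=\mathbf{Y}$ determines all but at most $\delta\log k$ bits of $\mathbf{u}$, and these free bits occupy only $O(\delta)$ of the symbols and form $O(\delta)$ intervals of coordinates.

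\emph{Using the parities and the main obstacle.} Expressing $\mathbf{u}$ through its $O(\delta)$ partially free symbols, membership in the conditioning subspace becomes $\delta$ $GF(q)$-linear equations whose coefficients are entries of the parity generator restricted to those symbols; by the MDS property any $\delta$ of its columns are linearly independent over $GF(q)$. The step to establish is that these equations cut the $O(\delta)$-symbol, $\le\delta\log k$-bit affine family of candidates down to at most $h(\delta)$ points — intuitively, each affected region forfeits only a bounded amount of rank (as in $\delta=1$, where the single region forfeits exactly one), so the total rank deficiency is $O(\delta)$ and the surviving count is $2^{O(\delta)}$; specializing to one deletion and one region recovers $h(1)=2$. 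I expect this last step to be the crux: one must show that the misalignment created by the deletions costs the parity system only an $O(\delta)$ loss of rank — equivalently, that the coordinate subspace spanned by the free bits is in sufficiently general position relative to the kernel of the parity map — which is where the MDS structure is essential, and which requires careful accounting of the boundary bits shared by consecutive symbols inside a region, of the re-inserted bits, and of the degenerate case in which a block is guessed to carry two or more deletions (making the system over-determined, which only helps). A lighter bookkeeping point is that only finitely many (a function of $\delta$) interleaving patterns of guessed versus true deletions can occur, so summing the per-pattern bounds keeps $h(\delta)$ independent of $k$.
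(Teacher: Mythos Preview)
Your setup and reduction are exactly right and match the paper's strategy: count, over messages with the prescribed parities, how many can produce the same case-$i$ decoded string, and bound that count by a function of $\delta$ alone. Your identification of the free bits (at most $\delta\log k$ of them, living in $O(\delta)$ symbols) and of the $\delta$ parity equations as the remaining constraints is also correct.

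The gap is precisely the step you flag as ``the crux'': you do not actually prove that the parity system reduces the candidate set to $h(\delta)$ points, and your heuristic that each region loses $O(1)$ rank, so the total loss is $O(\delta)$ and $h(\delta)=2^{O(\delta)}$, is not justified and is in fact stronger than what the paper obtains. The paper's bound is $h(\delta)=2^{\delta^2}$, and the mechanism is a concrete decomposition you are missing. Inside each guessed-erased block $j$, the sub-block chunked there contains some bits that genuinely originated in block $j$ and some that were shifted in from neighbors; call the former \emph{type~I} and everything else (shifted-in bits plus the $\delta$ truly deleted bits) \emph{type~II}. The type~I bits of block $j$ appear in the $j$-th symbol as a contiguous run of consecutive powers of $\alpha$, so they can be packaged as a single $GF(q)$ unknown $B_j$. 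Writing each parity equation in the form $\sum_j \alpha^{\nu_j}B_j + (\text{contribution of type~II bits}) = p'$, the MDS property guarantees that for any fixed assignment of the type~II bits the $\delta\times\delta$ system in $B_1,\ldots,B_\delta$ has a unique $GF(q)$ solution, and the polynomial form of each $B_j$ then pins the type~I bits uniquely. Since the shift at each erased block is at most $\delta$, the number of type~II bits is $\beta\le\delta^2$, yielding $\gamma\le 2^{\delta^2}$. This is the missing idea: separate the free bits so that the ``bulk'' (type~I) is determined by MDS once the $O(\delta^2)$ boundary bits (type~II) are fixed, rather than trying to argue directly about rank loss over $GF(2)$.
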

\noindent We assume Claim~\ref{claim:2} is true for now and prove it later (see Appendix~\ref{app:A}). To bound the probability of decoding failure for $\delta$ deletions, we use the result of Claim~\ref{claim:2} and follow the same steps of the proof for $\delta=1$ deletion while considering $t$ cases instead of $k/\log k$. Some of the steps will be skipped for the sake of brevity. 
\begin{align}
Pr\left(F|p_1,\ldots,p_{\delta}\right) &\leq Pr\left(\bigcup_{j=2}^{t}\{\bm{\mathcal{Y}_j}\in \mathrm{A},\bm{\mathcal{Y}_j} \neq \bm{\mathcal{Y}_1}\} \biggr\rvert p_1,\ldots,p_{\delta} \right) \\
&\leq \sum_{j=2}^{t} Pr\left(\bm{\mathcal{Y}_j}\in \mathrm{A} | p_1,\ldots,p_{\delta} \right) \\
&\leq \sum_{j=2}^{t} \sum_{\mathbf{Y}\in A} \frac{h(\delta)}{q^{\frac{k}{\log k}-\delta}} \label{e:c2} \\
&< \frac{t\cdot h(\delta)}{q^{c-\delta}}. \label{e:b}
\end{align}
Furthermore,
\begin{align}
Pr(F) &= \sum_{p_1,\ldots,p_{\delta}\in GF(q)} Pr\left(F| p_1,\ldots,p_{\delta}\right)Pr\left( p_1,\ldots,p_{\delta} \right) \label{e:111} \\
&< \sum_{p_1,\ldots,p_{\delta}\in GF(q)} \frac{t \times h(\delta)}{q^{c-\delta}}~Pr\left( p_1,\ldots,p_{\delta} \right) \label{e:122}\\
&=  \frac{t \cdot h(\delta)}{q^{c-\delta}}. \label{e:a}
\end{align}
Therefore,
\begin{equation}
Pr(F) = \mathcal{O} \left( \frac{1}{k^{c-2\delta}~\log^{\delta} k} \right). \label{e:t}
\end{equation}
\eqref{e:c2} follows from Claim~\ref{claim:2}. \eqref{e:122} follows from \eqref{e:b}. \eqref{e:t} follows from \eqref{ttt}, \eqref{e:a} and the fact that $h(\delta)$ is constant (independent of $k$) for a constant $\delta$.

\subsection{Proof of Claim~\ref{claim:1}} 
\label{clm1}
We focus on case $i$ ($i$ fixed) that assumes that the deletion has occurred in block $i$. We observe the output $\mathcal{Y}_i$ of the decoder in step~$3$ of the decoding scheme for all possible input messages, for a fixed deletion position and a given parity $p_1$. Recall that $\bm{\mathcal{Y}_i}$ is a random variable taking values in $\mathrm{A_1}$. Claim~\ref{claim:1} gives an upper bound on the probability mass function of $\bm{\mathcal{Y}_i}$ for any $i$ and for a given $p_1$. We distinguish here between two cases. If case $i$ is correct, i.e., the  assumption on the deletion position is correct, then $\bm{\mathcal{Y}_i}$ is always decoded correctly and it is uniformly distributed over $\mathrm{A_1}$. If case $i$ is wrong, then $\bm{\mathcal{Y}_i}$ is not uniformly distributed over $\mathrm{A_1}$ as illustrated in the next example.
\begin{example} 
\label{ex:3}
Let the length of the binary message $\mathbf{u}$ be $k=16$ and consider $\delta=1$ deletion. Let the first MDS parity $p_1$ be the sum of the $k/\log k=4$ message symbols. Consider the previously defined set $\mathrm{A_1}$ with $p_1=0$. Consider the messages,
\begin{align*}
\mathbf{U_1} &= (0,0,0,0) \in \mathrm{A_1}, \\
\mathbf{U_2} &= (\alpha,0,0,\alpha) \in \mathrm{A_1}.
\end{align*}
For the sake of simplicity, we skip the last encoding step (Block IV, Fig.~\ref{fig:1}), and assume that $p_1$ is recovered at the decoder. Therefore, the corresponding codewords to be transmitted are
\begin{align*}
\mathbf{x_1} &= \color{blue}0~0~\underline{\textbf{0}}~0~0~0~0~0~0~0~0~0~0~0~0~0~\color{red}0~0~0~0\color{black}, \\
\mathbf{x_2} &= \color{blue}0~0~\underline{\textbf{1}}~0~0~0~0~0~0~0~0~0~0~0~1~0~\color{red}0~0~0~0\color{black}.
\end{align*}
Now, assume that the $3^{rd}$ bit of $\mathbf{x_1}$ and $\mathbf{x_2}$ was deleted, and case $4$ (wrong case) is considered. It easy to verify that in this case, for both codewords, the $q-$ary output of the decoder will be
\begin{equation*}
\bm{\mathcal{Y}_4} = (0,0,0,0) \in \mathrm{A_1}.
\end{equation*}
This shows that there exists a wrong case $i$, where the same output can be obtained for two different inputs and a fixed deletion position. Thereby, the distribution of $\bm{\mathcal{Y}_i}$ over $\mathrm{A_1}$ is not uniform.
\end{example}
The previous example suggests that to find the bound in Claim~\ref{claim:1}, we need to determine the maximum number of different inputs that can generate the same output for an arbitrary fixed deletion position and a given parity $p_1$. We call this number $\gamma$. Once we obtain $\gamma$ we can write
\begin{equation}
Pr\left(\bm{\mathcal{Y}_i}=\mathbf{Y} | \mathcal{D}=d, \mathcal{P}_1=p_1 \right) \leq \frac{\gamma}{\left \lvert \mathrm{A_1} \right \rvert} = \frac{\gamma}{q^{\frac{k}{\log k}-1}}, \label{gamma}
\end{equation}
where $\mathcal{D}\in \{1,\ldots,n\}$ is the random variable representing the position of the deleted bit. We will explain our approach for determining $\gamma$ by going through an example for $k=16$ that can be easily generalized for any $k$. We denote by $b_z\in GF(2)$, $z=1,2,\ldots,k$, the bit of the message $\mathbf{u}$ in position $z$.
\begin{example} \label{ex:4}
Let $k=16$ and $\delta=1$. Consider the binary message $\mathbf{u}$ given by
\begin{equation*}
\mathbf{u}=b_1~b_2~b_3~b_4~b_5~b_6~b_7~b_8~b_9~b_{10}~b_{11}~b_{12}~b_{13}~b_{14}~b_{15}~b_{16}.
\end{equation*}
The extension field used here has a primitive element $\alpha$, with $\alpha^4=\alpha+1$. Assume that $b_3$ was deleted and case $4$ is considered. Hence, the binary string received at the decoder is chunked as follows
\begin{equation*}
\underbrace{b_1~b_2~b_4~b_5}_{S_1}~\underbrace{b_6~b_7~b_8~b_9}_{S_2}~\underbrace{b_{10}~b_{11}~b_{12}~b_{13}}_{S_3}~\underbrace{b_{14}~b_{15}~b_{16}}_{\mathcal{E}},
\end{equation*}
where the erasure is denoted by $\mathcal{E}$, and $S_1, S_2$ and $S_3$ are the first 3 symbols of $\bm{\mathcal{Y}_4}$ given by
\begin{align*}
S_1 &= \alpha^3 b_1+\alpha^2 b_2 +\alpha b_4+ b_5 \in GF(16), \\
S_2 &= \alpha^3 b_6+\alpha^2 b_7 +\alpha b_8+ b_9 \in GF(16),\\
S_3 &= \alpha^3 b_{10}+\alpha^2 b_{11} +\alpha b_{12}+ b_{13} \in GF(16).
\end{align*}
The fourth symbol $S_4$ of $\bm{\mathcal{Y}_4}$ is to be determined by erasure decoding.
Suppose that there exists another message $\mathbf{u'}\neq \mathbf{u}$ such that $\mathbf{u}$ and $\mathbf{u'}$ lead to the same decoded string \mbox{$\bm{\mathcal{Y}_4}=(S_1,S_2,S_3,S_4)$}. Since these two messages  generate the same values of $S_1$, $S_2$ and $S_3$, then they should have the same values for the following bits 
\begin{equation*}
b_1~b_2~b_4~b_5~b_6~b_7~b_8~b_9~b_{10}~b_{11}~b_{12}~b_{13}.
\end{equation*}
We refer to these $k-\log k=12$ bits by ``fixed" bits. The only bits that can be different in $\mathbf{u}$ and $\mathbf{u'}$ are $b_{14}$, $b_{15}$ and $b_{16}$ which correspond to the erasure, and the deleted bit $b_3$. We refer to these $\log k=4$ bits by ``free" bits. Although these ``free" bits can be different in $\mathbf{u}$ and $\mathbf{u'}$, they are constrained by the fact that the first parity in their corresponding $q-$ary codewords $\mathbf{X}$ and $\mathbf{X'}$ should have the same value $p_1$. Next, we express this constraint by a linear equation in $GF(16)$. Without loss of generality, we assume that $p_1 \in GF(16)$ is the sum of the $k/\log k=4$ message symbols. Hence, $p_1$ is given by
\begin{align*}
p_1=\alpha^3 & \left(b_1+b_5+b_9+b_{13}\right)+\alpha^2\left(b_2+b_6+b_{10}+b_{14}\right)  \\
&+\alpha\left(b_3+b_7+b_{11}+b_{15} \right)+\left(b_4+b_8+b_{12}+b_{16}\right).
\end{align*}
Rewriting the previous equation by having the ``free" bits on the LHS and the ``fixed" bits and $p_1$ on the RHS we get
\begin{equation}
\label{eq:22}
\alpha b_3 + \alpha^2 b_{14} + \alpha b_{15} + b_{16} = p',
\end{equation} 
where $p'\in GF(16)$ and is given by $p'=p_1+\alpha^3\left(b_1+b_5+b_9+b_{13}\right)+\alpha^2\left(b_2+b_6+b_{10}\right)+\alpha\left(b_7+b_{11} \right)+\left(b_4+b_8+b_{12}\right)$. The previous equation can be written as the following linear equation in $GF(16)$,
\begin{equation}
\label{eq:11}
0\alpha^3+b_{14}\alpha^2+ \left(b_3+b_{15}\right)\alpha  + b_{16} = p'.
\end{equation}
Now, to determine $\gamma$, we count the number of solutions of \eqref{eq:11}. If the unknowns in \eqref{eq:11} were symbols in $GF(16)$, then the solutions of \eqref{eq:11} would span an affine subspace of size $q^{k/\log k-1}=16^3$. However, the unknowns in \eqref{eq:11} are binary, so we show next that it has at most $2$ solutions. Let 
\begin{equation}
a_3\alpha^3+a_2\alpha^2+a_1\alpha+a_0=p'\label{e:p}
\end{equation}
be the polynomial representation of $p'$ in $GF(16)$ where $(a_3,a_2,a_1,a_0)\in GF(2)^4$. Every element in $GF(16)$ has a unique polynomial representation of degree at most $3$. Comparing \eqref{eq:11} and \eqref{e:p}, we obtain the following system of equations
\[
  \left\{\def\arraystretch{1.2}%
  \begin{array}{@{}c@{\quad}l@{}}
    b_{16}&=a_0,\\
    b_3+b_{15}&=a_1,\\
    b_{14}&=a_2,\\
    0&=a_3.
  \end{array}\right. 
\]
If $a_3\neq 0$, then \eqref{eq:11} has no solution. If $a_3= 0$, then \eqref{eq:11} has $2$ solutions because $b_3+b_{15}=a_1$ has $2$ solutions. Therefore, \eqref{eq:11} has at most $2$ solutions, i.e., $\gamma \leq 2$.
\end{example}
The analysis in Example~\ref{ex:4} can be directly generalized for messages of any length $k$. In general, the analysis yields $\log k$ ``free" bits and $k-\log k$ ``fixed" bits. Now, we generalize \eqref{eq:11} and show that $\gamma\leq 2$ for any $k$. Without loss of generality, we assume that $p_1\in GF(q)$ is the sum of the $k/\log k$ symbols of the $q-$ary message $\mathbf{U}$. Consider a wrong case $i$ that assumes that the deletion has occurred in block $i$. Let $d_j$ be a fixed bit position in block $j$, $j\neq i$, that represents the position of the deletion. Depending on whether the deletion occurred before or after block $i$, the generalization of \eqref{eq:11} is given by one of the two following equations in $GF(q)$. \\
If $j<i$,
\begin{equation}
b_{d_j} \alpha^w  + b_{\ell+1}\alpha^{m-1} + b_{\ell}\alpha^{m-2}  + \ldots + b_{\ell+m} = p'', \label{e:ji}
\end{equation}
If $j>i$,
\begin{equation}
b_{d_j}\alpha^w  + b_{\ell}\alpha^m + b_{\ell+1}\alpha^{m-1}  + \ldots + b_{\ell+m-1}\alpha  = p'', \label{e:ij}
\end{equation}
\noindent where $\ell=(i-1)\log k+1$, $m=\log k-1$, $w=j\log k-b_j$ and $p''\in GF(q)$ (the generalization of $p'$ in Example~\ref{ex:4}) is the sum of $p_1$ and the part corresponding to the ``fixed" bits. Suppose that $j<i$. Note that $1\leq w \leq m$, so \eqref{e:ji} can be written as
\begin{equation}
b_{\ell+1} \alpha^{m-1} + \ldots + (b_{d_j}+b_{\ell+o})\alpha^{w}  + \ldots + b_{\ell+m} = p'', \label{e:ji2}
\end{equation}
where $o$ is an integer such that $1 \leq o \leq m$. Hence, by the same reasoning used in Example~\ref{ex:4} we can conclude that \eqref{e:ji2} has at most $2$ solutions. The same reasoning applies for \eqref{e:ij}, where $j>i$. Therefore, $\gamma \leq 2$ and from \eqref{gamma} we have
\begin{equation}
Pr\left(\bm{\mathcal{Y}_i}=\mathbf{Y} | \mathcal{D}=d, \mathcal{P}_1=p_1\right) \leq \frac{2}{q^{\frac{k}{\log k}-1}}. \label{c:1}
\end{equation}
The bound in~\eqref{c:1} holds for arbitrary $d$. Therefore, the upper bound on the probability of decoding failure in~\eqref{d:1} holds for any deletion position picked independently of the codeword. Moreover, for any given distribution on $\mathcal{D}$ (like the uniform distribution for example), we can apply the total law of probability with respect to $\mathcal{D}$ and use the result from~\eqref{c:1} to get
\begin{equation*}
Pr\left(\bm{\mathcal{Y}_i}=\mathbf{Y} | \mathcal{P}_1=p_1 \right) \leq  \frac{2}{q^{\frac{k}{\log k}-1}}.
\end{equation*}
\section{Trade-offs}
\label{sec:6.1}
As previously mentioned, the first encoding step in GC codes (Block I, Fig.~\ref{fig:1}) consists of chunking the message into blocks of length $\log k$ bits. In this section, we generalize the results in Theorem~\ref{thm:1} by considering chunks of arbitrary length $\ell$ bits $(\ell \leq k)$ \footnote{For $\ell=k$ and $c=1$, the code is equivalent to a $(\delta +1)$ repetition code.}, instead of $\log k$ bits. We show that if $\ell=\Omega(\log k)$, then GC codes have an asymptotically vanishing probability of decoding failure. This generalization allows us to demonstrate two trade-offs achieved by GC codes, based on the code properties in Theorem~\ref{thm:2}.
\begin{theorem} 
\label{thm:2} 
Guess \& Check (GC) codes can correct in polynomial time up to a constant number of $\delta$ deletions. Let $c>\delta$ be a constant integer. The code has the following properties:
\begin{enumerate}[leftmargin=*]
\item Redundancy: $n-k = c(\delta+1) \ell$ bits. 
\item Encoding complexity is $\mathcal{O}(k \ell)$, and  decoding complexity is $\mathcal{O}\left(\frac{k^{\delta+1}}{\ell^{\delta-1}}\right)$.
\item For any $\delta$ deletion positions chosen independently of the codeword, the probability that the decoding fails for a uniform iid message is: $Pr(F) = \mathcal{O}\left(\frac{(k/\ell)^{\delta}}{2^{\ell(c-\delta)}} \right)$. 
\end{enumerate}
\end{theorem}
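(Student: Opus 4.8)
The plan is to re-run the proof of Theorem~\ref{thm:1} essentially verbatim, carrying the chunk length $\ell$ wherever the original argument wrote $\log k$, and to check that the only place the choice $\ell=\log k$ was actually used is the field-size requirement. All of the code now lives over $GF(q)$ with $q=2^{\ell}$: the message is split into $k/\ell$ symbols and fed into a systematic $(k/\ell+c,\,k/\ell)$ MDS code, which exists as long as $q=2^{\ell}>k/\ell+c$; this inequality is exactly what the hypothesis $\ell=\Omega(\log k)$ buys us, so I would state it explicitly at the start. The guessing/checking decoder and Definition~\ref{def:4} carry over unchanged.

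For \emph{redundancy} (part~1), the MDS step appends $c$ parity symbols, i.e.\ $c\ell$ bits, and Block~IV repeats each of them $\delta+1$ times, giving $n-k=c(\delta+1)\ell$. For \emph{complexity} (part~2), on the encoding side the binary-to-$q$-ary and $q$-ary-to-binary maps cost $\mathcal{O}(k)$, and computing the $c$ MDS parities costs $c\cdot k/\ell$ multiplications in $GF(q)$, each of cost $\mathcal{O}(\log^{2}q)=\mathcal{O}(\ell^{2})$, for a total $\mathcal{O}(c\,k\ell)=\mathcal{O}(k\ell)$; on the decoding side the number of guesses is $t=\binom{k/\ell+\delta-1}{\delta}=\mathcal{O}((k/\ell)^{\delta})$ by the same bound $\binom{b}{a}\le b^{a}$ used for~\eqref{bin}, each case costs $\mathcal{O}(\ell^{2}\cdot k/\ell)=\mathcal{O}(k\ell)$ for the $\delta$-erasure decoding (the supersequence test of Criterion~2 costs only $\mathcal{O}(\ell^{2})$ per erased block and does not dominate), so the total is $\mathcal{O}(t\,k\ell)=\mathcal{O}(k^{\delta+1}/\ell^{\delta-1})$.

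For the \emph{probability of decoding failure} (part~3) the key observation is that Claims~\ref{claim:1} and~\ref{claim:2} are chunk-length agnostic. In any case the decoded word $\bm{\mathcal{Y}_i}$ still lies in the affine subspace cut out by the first $\delta$ (correctly recovered) parities, which now has size $q^{k/\ell-\delta}$, and the counting behind the per-case pmf bound only uses that the free binary unknowns — the deleted bits together with the bits of the erased blocks — are pinned down up to a system of $\delta$ linear equations over $GF(q)$ whose expansion in the degree-$(\ell-1)$ polynomial basis is sparse, with ``excess dimension'' (collisions plus consistency conditions) bounded purely in terms of $\delta$. Hence the generalized Claim~\ref{claim:2} reads $Pr(\bm{\mathcal{Y}_i}=\mathbf{Y}\mid \mathcal{P}_1=p_1,\ldots,\mathcal{P}_\delta=p_\delta)\le h(\delta)/q^{k/\ell-\delta}$ with the same $h(\delta)$, independent of both $k$ and $\ell$. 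Feeding this into the union-bound chain \eqref{e:2}--\eqref{e:a} with $t$ cases and $|\mathrm{A}|=q^{k/\ell-c}$, and then averaging over $p_1,\ldots,p_\delta$, yields $Pr(F)<t\,h(\delta)/q^{c-\delta}=\mathcal{O}\!\left((k/\ell)^{\delta}/2^{\ell(c-\delta)}\right)$; this recovers Theorem~\ref{thm:1} at $\ell=\log k$ and vanishes as $k\to\infty$ once $\ell=\Omega(\log k)$ with $c$ taken large enough.

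The main obstacle is not a new idea but a bookkeeping check: I must verify that the constant $h(\delta)$ in the generalized Claim~\ref{claim:2} genuinely does \emph{not} depend on $\ell$ — i.e.\ that making the chunks longer does not increase the number of binary solutions of the per-case linear constraint system — since otherwise an $\ell$-dependent factor would erode the $2^{\ell(c-\delta)}$ gain in part~3. This amounts to redoing the analysis of Example~\ref{ex:4} and Appendix~\ref{app:A} for general $\ell$ and confirming that each deletion contributes only an $O(1)$ (in $\ell$) amount of collision/consistency slack, so that $h(\delta)$ stays a function of $\delta$ alone. A minor secondary point is to pin down the precise threshold on $\ell$ (essentially $2^{\ell}>k/\ell+c$) under which the construction, and hence the theorem, make sense.
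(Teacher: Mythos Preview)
Your proposal is correct and follows essentially the same route as the paper's Appendix~\ref{app:B}: re-run the proof of Theorem~\ref{thm:1} with $\ell$ in place of $\log k$ and $q=2^{\ell}$, obtaining $t=\mathcal{O}((k/\ell)^{\delta})$, encoding cost $\mathcal{O}(k\ell)$, decoding cost $\mathcal{O}(t\cdot k\ell)$, and $Pr(F)<t\,h(\delta)/q^{c-\delta}$. The paper dispatches your ``main obstacle'' in one line---the proofs of Claims~\ref{claim:1} and~\ref{claim:2} only use that length-$\ell$ binary vectors map bijectively to $GF(2^{\ell})$, so the bound $\beta\le\delta^2$ (hence $h(\delta)=2^{\delta^2}$) goes through with $\ell$ replacing $\log k$ verbatim---and you are right that the field-size constraint $2^{\ell}>k/\ell+c$ deserves to be stated explicitly.
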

\begin{proof}
See Appendix~\ref{app:B}.
\end{proof}
\noindent  The code properties in Theorem~\ref{thm:2} enable two trade-offs for GC codes:

\noindent $1)$ {\em Trade-off A:}
By increasing $\ell$ for a fixed $k$ and $c$, we observe from Theorem~\ref{thm:2} that the redundancy increases linearly while the decoding complexity decreases as a polynomial of degree~$\delta$. Moreover, increasing $\ell$ also decreases the probability of decoding failure as a polynomial of degree~$\delta$. Therefore, trade-off A shows that by paying a linear price in terms of redundancy, we can simultaneously gain a~degree $\delta$ polynomial improvement in both decoding complexity and probability of decoding failure.

\noindent $2)$ {\em Trade-off B:}
By increasing $c$ for a fixed $k$ and $\ell$, we observe from Theorem~\ref{thm:2} that the redundancy increases linearly while the probability of decoding failure decreases exponentially. Here, the order of complexity is not affected by $c$. Therefore, trade-off B shows that by paying a linear price in terms of redundancy, we can gain an exponential improvement in probability of decoding failure, without affecting the order of complexity. This trade-off is of particular importance in models which allow feedback, where asking for additional parities (increasing $c$) will highly increase the probability of successful decoding. 
\section{Correcting $\delta$ Insertions}
\label{sec:i}
In this section we show that by modifying the decoding scheme of GC codes, and keeping the same encoding scheme, we can obtain codes that can correct $\delta$ insertions, instead of $\delta$ deletions. In fact, the resulting code properties for $\delta$ insertions (Theorem~\ref{thm:3}), are the same as that of $\delta$ deletions. Recall that $\ell$ (Section~\ref{sec:6.1}) is the code parameter representing the chunking length, i.e., the number of bits in a single block. For correcting $\delta$ insertions, we keep the same encoding scheme as in Fig.~\ref{fig:1}, and for decoding we only modify the following:~(i)~Consider the assumption that a certain block $B$ is affected by $\delta'$ insertions, then while decoding we chunk a length of $\ell+\delta'$ at the position of block $B$ (compared to chunking $\ell-\delta'$ bits when decoding deletions); (ii)~The blocks assumed to be affected by insertions are considered to be erased (same as the deletions problem), but now for Criterion 2 (Definition~\ref{def:4}) we check if the decoded erasure is a subsequence of the chunked super-block (of length $\ell+\delta'$). The rest of the details in the decoding scheme stay the same. 
\begin{theorem} 
\label{thm:3} 
Guess \& Check (GC) codes can correct in polynomial time up to a constant number of $\delta$ insertions. Let $c>\delta$ be a constant integer. The code has the following properties:
\begin{enumerate}[leftmargin=*]
\item Redundancy: $n-k = c(\delta+1) \ell$ bits.
\item Encoding complexity is $\mathcal{O}(k \ell)$, and  decoding complexity is $\mathcal{O}\left(\frac{k^{\delta+1}}{\ell^{\delta-1}}\right)$.
\item For any $\delta$ insertion positions chosen independently of the codeword, the probability that the decoding fails for a uniform iid message is: $Pr(F) = \mathcal{O}\left(\frac{(k/\ell)^{\delta}}{2^{\ell(c-\delta)}} \right)$. 
\end{enumerate}
\end{theorem}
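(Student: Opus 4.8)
The plan is to follow the same three-part structure used for Theorems~\ref{thm:1} and~\ref{thm:2}, observing that the encoding is literally unchanged so that the redundancy claim is immediate, and that the complexity and failure-probability claims reduce to re-running the arguments of Sections~\ref{sec:6} and~\ref{sec:6.1} with ``insertions'' substituted for ``deletions'' and with the modified chunking/Criterion~2 described just above the statement. First I would dispatch the redundancy: since the encoder is exactly the diagram in Fig.~\ref{fig:1}, the MDS step contributes $c\ell$ bits and the $(\delta+1)$ repetition of those bits contributes another factor $\delta+1$, giving $n-k=c(\delta+1)\ell$ verbatim. The repeated parity bits are still recoverable because the decoder can scan from the right and undo insertions into the parity region until the recovered parity block has its original length $c(\delta+1)\ell$; this reuses the same observation as in the deletion decoding step~1, so as before we may assume all $\delta$ insertions fall in the systematic part.

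Next I would treat the complexity. The guessing part now enumerates the ways to distribute $\delta$ insertions among the $k/\ell$ blocks, which is again $\binom{k/\ell+\delta-1}{\delta}=\mathcal{O}((k/\ell)^\delta)$ cases; in each case the affected blocks are erased and erasure-decoded from the first $\delta$ MDS parities exactly as before, at cost $\mathcal{O}(k\ell)$ per case (the field is still $GF(q)$ with $q=k$, and $\delta$ is constant). Criterion~2 now checks whether a decoded length-$\ell$ erasure is a \emph{subsequence} of a chunked super-block of length $\ell+\delta'$; this is still an $\mathcal{O}(\ell^2)$ dynamic-program check (Wagner--Fischer / standard subsequence DP), so it does not change the order. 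Multiplying, the decoding complexity is $\mathcal{O}\!\left((k/\ell)^\delta\cdot k\ell\right)=\mathcal{O}\!\left(k^{\delta+1}/\ell^{\delta-1}\right)$, matching the claim.

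The substantive part is the failure probability, and I would mirror the deletion proof exactly: condition on the recovered parities $p_1,\dots,p_\delta$, note that a \emph{possible} wrong case forces its decoded string $\bm{\mathcal{Y}_j}$ to lie in the affine subspace $\mathrm{A}$ of dimension $k/\ell-c$, union-bound over the $t=\mathcal{O}((k/\ell)^\delta)$ cases, and control $Pr(\bm{\mathcal{Y}_j}=\mathbf{Y}\mid p_1,\dots,p_\delta)$ by an analogue of Claim~\ref{claim:2}. For that analogue one counts, for a fixed insertion pattern and fixed parities, how many messages produce the same decoder output in a given wrong case: as in the deletion argument the ``fixed'' bits of $\mathbf{u}$ are determined and only the $\mathcal{O}(\delta\ell)$ bits touched by the erased super-blocks and the $\delta$ inserted positions are ``free,'' constrained by $\delta$ linear equations over $GF(q)$ whose unknowns are binary; this yields a bound $h(\delta)/q^{k/\ell-\delta}$ with $h(\delta)$ independent of $k$. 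Plugging in gives $Pr(F|p_1,\dots,p_\delta)<t\cdot h(\delta)/q^{c-\delta}$, and averaging over the parities (as in \eqref{e:111}--\eqref{e:a}) removes the conditioning, leaving $Pr(F)=\mathcal{O}\!\left((k/\ell)^\delta/2^{\ell(c-\delta)}\right)$ since $q=2^\ell$. The main obstacle is making precise the insertion analogue of the ``free/fixed bits'' linear-algebra count in Example~\ref{ex:4} and the generalization \eqref{e:ji}--\eqref{e:ij}: one must check that, even though an insertion lengthens a block rather than shortening it, the positional shift still produces a constraint in which at least one monomial collapses onto binary unknowns in a way that keeps the solution count bounded by a function of $\delta$ alone. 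Once that counting lemma is in hand, the rest is a line-for-line repetition of the deletion proof, so I would state it as a lemma (proved in an appendix parallel to Appendix~\ref{app:A}) and then invoke it.
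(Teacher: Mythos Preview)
Your proposal is correct and follows essentially the same approach as the paper, which in fact omits the proof entirely and merely asserts that ``the same analysis applies as in the proof of Theorem~\ref{thm:1}'' because the encoding is unchanged, the number of cases is unchanged, and the failure-probability argument goes through with the same techniques. Your write-up is more detailed than the paper's own treatment; the only slip is the parenthetical ``$q=k$'' in the complexity paragraph, which for general block length $\ell$ should read $q=2^{\ell}$ (as you correctly use later when concluding $Pr(F)=\mathcal{O}\!\left((k/\ell)^\delta/2^{\ell(c-\delta)}\right)$).
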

We omit the proof of Theorem~\ref{thm:3} because the same analysis applies as in the proof of Theorem~\ref{thm:1}. More specifically, the redundancy and the encoding complexity are the same as in Theorem~\ref{thm:1} because we use the same encoding scheme. The decoding complexity is also the same as in Theorem~\ref{thm:1} because the total number of cases to be checked by the decoder is unchanged. The proof of the upper bound on the probability of decoding failure also applies similarly using the same techniques.
\section{Simulation Results}
\label{sec:7}

We simulated the decoding  of GC codes and compared the obtained probability of decoding failure  to the  upper bound in Theorem~\ref{thm:1}. We tested the code for messages of length $k~=~256, 512$ and $1024$ bits, and for $\delta=2, 3$ and $4$ deletions.
\begin{table}[h]
\centering
\setlength\extrarowheight{1.2pt}
 \begin{tabular}{|c|c|c|c|c|c|c|c|c|}
\hline
\multirow{2}{*}{Config.} & \multicolumn{6}{c|}{$\delta$} \\ \cline{2-7}
& \multicolumn{2}{c|}{$2$}  & \multicolumn{2}{c|}{$3$} &  \multicolumn{2}{c|}{$4$} \\ \hline
$k$ & $R$ & $Pr(F)$ & $R$ & $Pr(F)$ &  $R$ & $Pr(F)$ \\ \hline
256 & $0.78$ & $1.3e^{-3}$ & $0.67$ & $4.0e^{-4}$ & $0.56$ &$0$ \\ \hline
512 & $0.86$ & $3.0e^{-4}$ & $0.78$ &$0$ & $0.69$ &$0$  \\ \hline
1024 & $0.92$ & $2.0e^{-4}$ & $0.86$ &$0$ & $0.80$ &$0$  \\ \hline
\end{tabular}
\captionsetup{font=footnotesize}
\caption{\footnotesize{The table shows the code rate $R=k/n$ and the probability of decoding failure $Pr(F)$  of GC codes for different message lengths $k$ and different number of deletions $\delta$. The results shown are for $c=\delta+1$ and $\ell=\log k$. The results of $Pr(F)$ are averaged over $10000$ runs of simulations. In each run, a message $\mathbf{u}$ chosen uniformly at random is encoded into the codeword $\mathbf{x}$. $\delta$ bits are then deleted uniformly at random from $\mathbf{x}$, and the resulting string is decoded.
}}
\vspace{-0.2cm}
\label{t}
\end{table}
To guarantee an asymptotically vanishing probability of decoding failure, the upper bound in Theorem~\ref{thm:1} requires that $c>2\delta$. Therefore, we make a distinction between two regimes, \mbox{(i) $\delta<c<2\delta:$}~Here, the theoretical upper bound is trivial. Table~\ref{t} gives the results for $c=\delta+1$ with the highest probability of decoding failure observed in our simulations  being of the order of $10^{-3}$. This indicates that GC codes can decode correctly with high probability in this regime, although not reflected in the upper bound;  \mbox{(ii) $c>2\delta:$}  The upper bound is of the order of $10^{-5}$ for $k=1024,\delta=2$, and $c=2\delta +1$. In the simulations no decoding failure was detected within $10000$ runs for $\delta+2 \leq c \leq 2\delta+1$. 
In general, the simulations show that GC codes  perform better than what the upper bound indicates. This is due to the fact that the effect of Criterion 2 (Definition~\ref{def:4}) is not taken into account when deriving the upper bound in Theorem~\ref{thm:1}. These simulations were performed on a personal computer and the programming code was not optimized. The average decoding time\footnote{These results are for $\ell=\log k$, the decoding time can be decreased if we increase $\ell$ (trade-off~A,  Section~\ref{sec:6.1}).} is in the order of milliseconds  for $(k=1024,\delta=2)$, order of seconds for $(k=1024,\delta=3)$, and order of minutes for $(k=1024,\delta=4)$. Going beyond these values of $k$ and $\delta$ will largely increase the running time due to the number of cases to be tested by the decoder. However, for the file synchronization application in which we are interested (see next section) the values of $k$ and $\delta$ are relatively small and decoding can be practical. 

\section{Application to File Synchronization}
\label{sec:8}
In this section, we describe how GC codes can be used to construct interactive protocols for file synchronization. We consider the model where two nodes (servers) have copies of the same file but one is obtained from the other by deleting $d$ bits. These nodes communicate interactively over a noiseless link to synchronize the file affected by deletions. Some of the most recent work on synchronization can be found in\mbox{\cite{V15,R10,Y14,S16,M11}}. In this section, we modify the synchronization algorithm by Venkataramanan {\em et al.} \cite{V15,R10}, and study the improvement that can be achieved by including GC codes as a black box inside the algorithm. The key idea in \cite{V15,R10} is to use {\em center bits} to divide a large string, affected by $d$ deletions, into shorter segments, such that each segment is affected by one deletion at most. Then, use VT codes to correct these shorter segments. Now, consider a similar algorithm where the large string is divided such that the shorter segments are affected by \mbox{$\delta$ $(1<\delta\ll d)$} or fewer deletions. Then, use GC codes to correct the segments affected by more than one deletion\footnote{VT codes are still used for segments affected by only one deletion.}. We set $c=\delta+1$ and $\ell=\log k$, and if the decoding fails for a certain segment, we send one extra MDS parity at a time within the next communication round until the decoding is successful. By implementing this algorithm, the gain we get is two folds: (i)~reduction in the number of communication rounds; (ii)~reduction in the total communication cost. We performed simulations for $\delta=2$ on files of size 1 Mb, for different numbers of deletions $d$. The results are illustrated in Table~\ref{ts}. We refer to the original scheme in \cite{V15,R10} by {\em Sync-VT}, and to the modified version by {\em Sync-GC}. The savings for $\delta=2$ are roughly $43\%$ to $73\%$ in number of rounds, and $5\%$ to $14\%$ in total communication cost. 
\begin{table}[h]
\centering
\setlength\extrarowheight{1.2pt}
 \begin{tabular}{|c|c|c|c|c|}
\hline
 & \multicolumn{2}{c|}{Number of rounds} & \multicolumn{2}{c|}{Total communication cost}\\ \hline
$d$ & Sync-VT & Sync-GC & Sync-VT & Sync-GC  \\ \hline
100 & $14.52$ & $10.15$ & $5145.29$ & $4900.88$ \\ \hline
150 & $16.45$ & $10.48$ & $7735.32$ & $7199.20$  \\ \hline
200 & $17.97$ & $10.88$ & $10240.60$ & $9332.68$  \\ \hline
250 & $18.93$ & $11.33$ & $12785.20$ & $11415.90$ \\ \hline
300 & $20.29$ & $11.70$ & $15318.20$ & $13397.80$ \\ \hline
\end{tabular}
\captionsetup{font=footnotesize}
\caption{ Results are averaged over $1000$ runs. In each run, a string of size 1~Mb is chosen uniformly at random, and the file to be synchronized is obtained by deleting $d$ bits from it uniformly at random. The total communication cost is expressed in bits. The number of {\em center bits} used is 25.}
\vspace{-0.3cm}
\label{ts}
\end{table}

Note that the interactive algorithm in~\cite{V15,R10} also deals with the general file synchronization problem where the edited file is affected by both deletions and insertions. There, the string is divided into shorter segments such that each segment is either:~(i)~not affected by any deletions/insertions; or~(ii) affected by only one deletion; or (iii)~affected by only one insertion. Then, VT codes are used to correct the short segments. GC codes can also be used in a similar manner when both deletions and insertions are involved. In this case, the string would be divided such that the shorter segments are affected by:~(i)~$\delta$ or fewer deletions; or (ii)~$\delta$ or fewer insertions.
\section{Conclusion}
\label{sec:11}
In this paper, we introduced a new family of systematic codes, that we called Guess \& Check (GC) codes, that can correct multiple deletions (or insertions) with high probability. We provided deterministic polynomial time encoding and decoding schemes for these codes. We validated our theoretical results by numerical simulations. Moreover, we showed how these codes can be used in applications to remote file synchronization.
In conclusion, we point out some open problems and possible directions of future work:
\begin{enumerate}
\item GC codes can correct $\delta$ deletions or $\delta$ insertions. Generalizing these constructions to deal with mixed deletions and insertions ({\em indels}) is a possible future direction.
\item In our analysis, we declare a decoding failure if the decoding yields more than one possible candidate string. A direction is to study the performance of GC codes in list decoding.
\item Developing faster decoding algorithms for GC codes is also part of the future work.
\item From the proof of Theorem~\ref{thm:1}, it can be seen that bound on the probability of decoding failure of GC codes holds if the deletion positions are chosen by an adversary that does not observe the codeword. It would be interesting to study the performance of GC codes under more powerful adversaries which can observe part of the codeword, while still allowing a vanishing probability of decoding failure. 
\end{enumerate}

\appendices
\section{Proof of Claim \ref{claim:2}}
\label{app:A}
Claim~\ref{claim:2} is a generalization of Claim~\ref{claim:1} for $\delta>1$ deletions. Recall that the decoder goes through $t$ cases where in each case corresponds to one possible way to distribute the $\delta$ deletions among the $k/\log k$ blocks. Claim~\ref{claim:2} states that there exists a deterministic function $h$ of $\delta$, $h(\delta)$ independent of $k$, such that for any case $i$, $i=1,2,\ldots,t$,
\begin{equation*}
Pr\left(\bm{\mathcal{Y}_i}=\mathbf{Y} | \mathcal{P}_1=p_1,\ldots,\mathcal{P}_{\delta}=p_{\delta}\right) \leq \frac{h(\delta)}{q^{\frac{k}{\log k}-\delta}},
\end{equation*} 
where $\bm{\mathcal{Y}_i}$ is the random variable representing the $q-$ary string decoded in case $i$, $i=1,2,\ldots,t$, and $\mathcal{P}_r, r=1,\ldots,c$, is the random variable representing the $r^{th}$ MDS parity symbol. 

To prove the claim, we follow the same approach used in the proof of Claim~\ref{claim:1}. Namely, we count the maximum number of different inputs (messages) that can generate the same output (decoded string) for $\delta$ fixed deletion positions ($d_1,\ldots,d_{\delta}$) and $\delta$ given parities ($p_1,\ldots,p_{\delta}$). Again, we call this number $\gamma$. Recall that, for $\delta$ deletions, $\bm{\mathcal{Y}_i}$ is decoded based on the first $\delta$ parities\footnote{The underlying assumption here is that the $\delta$ deletions affected exactly $\delta$ blocks. In cases where it is assumed that less than $\delta$ blocks are affected, then less than $\delta$ parities will be used to decode $\bm{\mathcal{Y}_i}$, and the same analysis applies.}. Hence, $\bm{\mathcal{Y}_i} \in \mathrm{A^{\delta}}$, where
\begin{align}
\mathrm{A^{\delta}}&\triangleq \mathrm{A_1} \cap \mathrm{A_2} \cap \ldots \cap \mathrm{A_\delta}, \\
\mathrm{A_r} &\triangleq \{ \mathbf{Y} \in GF(q)^{k/\log k} |~\mathbf{G_r^TY}=p_r\} \label{eq:A1},
\end{align}
for $r=1,\ldots,\delta$ \footnote{The set $\mathrm{A^{\delta}}$ is the generalization of set $\mathrm{A_1}$ for $\delta=1$.}. We are interested in showing that $\gamma$ is independent of the binary message length $k$. To this end, we upper bound $\gamma$ by a deterministic function of $\delta$ denoted by $h(\delta)$. Hence, we establish the following bound
\begin{equation}
\label{back}
Pr\left(\bm{\mathcal{Y}_i}=\mathbf{Y} | d_1,\ldots, d_{\delta},p_1,\ldots,p_{\delta}\right) \leq \frac{\gamma}{\left \lvert \mathrm{A^{\delta}} \right \rvert} \leq \frac{h(\delta)}{q^{\frac{k}{\log k}-\delta}}.
\end{equation}
We will now explain the approach for bounding $\gamma$ through an example for $\delta=2$ deletions.
\begin{example}
\label{ex:5}
Let $k=32$ and $\delta=2$. Consider the binary message $\mathbf{u}$ given by
\begin{equation*}
\mathbf{u}=b_1~b_2~\ldots~b_{32}.
\end{equation*}
Its corresponding $q-$ary message $\mathbf{U}$ consists of $7$ symbols (blocks) of length $\log k=5$ bits each. The message $\mathbf{u}$ is encoded into a codeword $\mathbf{x}$ using the GC code (Fig.~\ref{fig:1}). We assume that the first parity is the sum of the systematic symbols and the encoding vector for the second parity is $(1,\alpha,\alpha^2,\alpha^3,\alpha^4,\alpha^5,\alpha^6)$ \footnote{The extension field used is $GF(32)$ and has a primitive element $\alpha$, with $\alpha^5=\alpha^2+1$.}. Moreover, we assume that the actual deleted bits in $\mathbf{x}$ are $b_1$ and $b_7$, but the chunking is done based on the assumption that deletions occurred in the $3^{rd}$ and $5^{th}$ block (wrong case). Similar to Example~\ref{ex:4}, it can be shown that the ``free" bits are constrained by the following system of two linear equations in $GF(32)$,
\begin{equation}
\label{ed1}
\left\{
\begin{array}{l}
\alpha^4 b_1 + \alpha^3 b_7 +\alpha^2 b_{13}+\alpha^4 b_{14}+b_{15}+\alpha^4 b_{16}  \\
~~~~~~~~~~~~~~~~~~~~~~+\alpha^3 b_{22} + \alpha^2 b_{23} + \alpha b_{24} +b_{25} = p'_1,  \\
~~\\
\alpha^4 b_1 +\alpha^4 b_7 +\alpha^2(\alpha^2 b_{13}+\alpha^4 b_{14}+b_{15})+\alpha^7 b_{16}  \\
~~~~~~~~~~~~~~~~~+\alpha^5(\alpha^3 b_{22} + \alpha^2 b_{23} + \alpha b_{24} +b_{25}) = p'_2. 
\end{array}
\right.
\end{equation}
To upper bound $\gamma$, we upper bound the number of solutions of the system given by \eqref{ed1}. Equation~\eqref{ed1} can be written as follows
\begin{equation}
\label{ed3}
  \left\{
  \begin{array}{rrr}
    (b_1+b_{16}) \alpha^4 + b_7 \alpha^3 + B_1 + B_2 = p'_1, \\
   (b_1+b_7) \alpha^4 +  b_{16}\alpha^7 + \alpha^2 B_1 + \alpha^5 B_2 = p'_2, 
  \end{array}
  \right.
\end{equation}
where $B_1$ and $B_2$ are two symbols in $GF(32)$ given by
\begin{align}
 B_1 &= \alpha^2 b_{13}+\alpha^4 b_{14}+b_{15}, \label{p1}\\
B_2 &= \alpha^3 b_{22} + \alpha^2 b_{23} + \alpha b_{24} +b_{25}. \label{p2}
\end{align} 
Notice that the coefficients of $B_1$ and $B_2$ in~\eqref{ed3} originate from the MDS encoding vectors. Hence, for given bit values of $b_1$, $b_7$ and $b_{16}$, the MDS property implies that \eqref{ed3} has a unique solution for $B_1$ and $B_2$. Furthermore, since $B_1$ and $B_2$ have unique polynomial representations in $GF(32)$ of degree at most 4, for given values of $B_1$ and $B_2$, \eqref{p1} and \eqref{p2} have at most one solution for $b_{13}, b_{14}, b_{15}, b_{22}, b_{23}, b_{24}$ and $b_{25}$. We think of bits $b_{13}, b_{14}, b_{15}, b_{22}, b_{23}, b_{24}$ and $b_{25}$ as ``free" bits of type~I, and bits $b_{1},b_{7},b_{16}$ as ``free" bits of type~II. The previous reasoning indicates that for given values of the bits of type~II, \eqref{ed1} has at most one solution. Therefore, an upper bound on $\gamma$ is given by the number of possible choices of the bits of type~II. Hence, $\gamma \leq 2^3=8$.
\end{example}
\noindent Now, we generalize the previous example and upper bound $\gamma$ for $\delta>2$ deletions. Without loss of generality, we assume that the $\delta$ deletions occur in $\delta$ different blocks. Then, the ``free" bits are constrained by  a system of $\delta$ linear equations in $GF(q)$, where $q=k$. Let $\nu_{(.)}$ and $\mu_{(.)}$ be non-negative integers of value at most $k$. Each of the $\delta$ equations has the following form 
\begin{equation}
\sum_{i=1}^{\beta} \alpha^{\nu_i} b_{\mu_i} + \sum_{j=1}^{\delta} \alpha^{\nu_j} B_j=p',
\end{equation}
where $\beta$ is the number of ``free" bits of type~II, $B_j\in GF(q)$ is a linear combination of part of the bits in block $j$ (``free" bits of type~I), and $p'\in GF(q)$. The coefficients of $B_j$, $j=1,\ldots, \delta$, originate from the MDS code generator matrix. Hence, for given values of the bits of type~II, the system of $\delta$ equations has a unique solution for $B_j$, $j=1,\ldots,\delta$. Furthermore, the linear combination of the bits that gives $B_j$, $j=1,\ldots,\delta$, has the following form
\begin{equation}
\label{Bj}
B_j=\alpha^m b_{j_1} + \alpha^{m-1} b_{j_2} + \ldots + \alpha^{m-\lambda_j+1} b_{j_{\lambda_j}},
\end{equation} 
where $m<\log k$ is an integer, and $\lambda_j$ is the number of ``free" bits of type~I in $B_j$. Notice that \eqref{Bj} corresponds to a polynomial representation in $GF(q)$, $q=k$, of degree less than $\log k$. Hence, for a given $B_j$, \eqref{Bj} has at most one solution. Therefore, $\gamma$ is upper bounded by the number of possible choices of the bits of type~II, i.e., $\gamma \leq 2^{\beta}$. Recall that, when it is assumed that the block $j$ is affected by deletions, a sub-block of bits is chunked at the position of block $j$. However, because of the shift caused by the $\delta$ deletions, that sub-block may contain bits which do not originate from block~$j$. The $\lambda_j$ ``free" bits of type~I in $B_j$, $b_{j_1},\ldots,b_{j_{\lambda_j}}$, are the bits of the sub-block which do originate from block $j$. Since the shift at block $j$ is at most $\delta$ positions, it is easy to see that for any $j=1,\ldots,\delta$, we have $\lambda_j \geq \log k - \delta$. Therefore,
since $\beta= \delta \log k - \sum_{j=1}^{\delta} \lambda_j$, we can use the lower bound on $\lambda_j$ to show that $\beta\leq \delta^2$. Hence, we obtain $\gamma \leq 2^{\delta^2} \triangleq h(\delta)$~\footnote{If the $\delta$ deletions occur in $z<\delta$ blocks, then $\beta=z\log k - \sum_{j=1}^{z}\lambda_j$, and the upper bound $\beta \leq \delta^2$ would still hold.}. In summary, we have shown that $\gamma$ is upper bounded by a deterministic function of $\delta$ that is independent of $k$. 

Since the bound in~\eqref{back} holds for arbitrary deletion positions ($d_1,\ldots,d_{\delta}$), the upper bound on the probability of decoding failure in Theorem~\ref{thm:1} holds for any $\delta$ deletion positions picked {\em independently} of the codeword. Moreover, for any given random distribution on the $\delta$ deletion positions (like the uniform distribution for example), we can apply the total law of probability and use the result from~\eqref{back} to get
\begin{equation*}
Pr\left(\bm{\mathcal{Y}_i}=\mathbf{Y} | \mathcal{P}_1=p_1,\ldots, \mathcal{P}_{\delta}=p_{\delta} \right) \leq \frac{h(\delta)}{q^{\frac{k}{\log k}-\delta}}.
\end{equation*}
\section{Proof of Theorem~\ref{thm:2}}
\label{app:B}
We consider the same encoding scheme illustrated in Fig.~\ref{fig:1} with the only modification that the message is chunked into blocks of length $\ell$ bits, and the field is size is $q=2^{\ell}$. Taking this modification into account, the proof of Theorem~\ref{thm:2} follows the same steps of the proof of Theorem~\ref{thm:1}. It is easy to see from Fig.~\ref{fig:1} that the redundancy here becomes $c(\delta+1)\ell$ bits. Also, the number of $q-$ary systematic symbols becomes $k/\ell$. Therefore, the total number of cases to be checked by the decoder is
\begin{equation}
t=\binom{k/\ell+\delta-1}{\delta}=\mathcal{O}\left(\frac{k^{\delta}}{\ell^{\delta}}\right).
\end{equation}
Furthermore, from the proof of Theorem~\ref{thm:1} we have that the encoding complexity is
\begin{equation}
\mathcal{O}\left(c \cdot \frac{k}{\ell}\cdot \log^2 q \right)=\mathcal{O}\left(k\ell\right),
\end{equation}
and the decoding complexity is
\begin{equation}
\mathcal{O}\left(t \cdot \log^2 q \cdot k/\ell \right)=\mathcal{O}\left(\frac{k^{\delta+1}}{\ell^{\delta-1}} \right).
\end{equation}
As for the probability of decoding failure, the same intermediary steps of the proof of Theorem~\ref{thm:1} apply for chunks of length $\ell$ instead of $\log k$. In particular, the key property used in the proofs of Claim~\ref{claim:1} and Claim~\ref{claim:2}, is that each binary vector is mapped to a unique element in $GF(q)$. This property also applies here because the field size is $q=2^{\ell}$. Hence, from~\eqref{e:a} we have
\begin{equation}
\label{e:ell}
Pr(F) <  \frac{t \cdot h(\delta)}{q^{c-\delta}}=\mathcal{O}\left(\frac{(k/\ell)^{\delta}}{2^{\ell(c-\delta)}} \right).
\end{equation}
From~\eqref{e:ell} we can see that the probability of decoding failure vanishes asymptotically if 
\begin{equation}
\label{e:l}
\lim_{k\rightarrow +\infty} \frac{k^{\delta}}{\ell^{\delta} 2^{\ell(c-\delta)}} =0,
\end{equation}
which holds if $\ell=\Omega(\log k)$.
\section*{Acknowledgment}
The authors would like to thank the reviewers for their valuable comments and suggestions that helped to improve the paper. The authors would also like to thank Kannan Ramchandran for valuable discussions related to an earlier draft of this work; and Ramezan Paravi Torghabeh for his help with the simulations related to the work in \cite{V15,R10}.

\bibliographystyle{IEEEtran}
\bibliography{Refs}

\begin{thebibliography}{10}
\providecommand{\url}[1]{#1}
\csname url@samestyle\endcsname
\providecommand{\newblock}{\relax}
\providecommand{\bibinfo}[2]{#2}
\providecommand{\BIBentrySTDinterwordspacing}{\spaceskip=0pt\relax}
\providecommand{\BIBentryALTinterwordstretchfactor}{4}
\providecommand{\BIBentryALTinterwordspacing}{\spaceskip=\fontdimen2\font plus
\BIBentryALTinterwordstretchfactor\fontdimen3\font minus
  \fontdimen4\font\relax}
\providecommand{\BIBforeignlanguage}[2]{{%
\expandafter\ifx\csname l@#1\endcsname\relax
\typeout{** WARNING: IEEEtran.bst: No hyphenation pattern has been}%
\typeout{** loaded for the language `#1'. Using the pattern for}%
\typeout{** the default language instead.}%
\else
\language=\csname l@#1\endcsname
\fi
#2}}
\providecommand{\BIBdecl}{\relax}
\BIBdecl

\bibitem{GC}
S.~{Kas Hanna} and S.~{El Rouayheb}, ``Guess check codes for deletions and
  synchronization,'' in \emph{2017 IEEE International Symposium on Information
  Theory (ISIT)}, June 2017, pp. 2693--2697.

\bibitem{VT65}
R.~Varshamov and G.~Tenengol’ts, ``Correction code for single asymmetric
  errors,'' \emph{Automat. Telemekh}, vol.~26, no.~2, pp. 286--290, 1965.

\bibitem{V15}
R.~Venkataramanan, V.~N. Swamy, and K.~Ramchandran, ``Low-complexity
  interactive algorithms for synchronization from deletions, insertions, and
  substitutions,'' \emph{IEEE Transactions on Information Theory}, vol.~61,
  no.~10, pp. 5670--5689, 2015.

\bibitem{R10}
R.~Venkataramanan, H.~Zhang, and K.~Ramchandran, ``Interactive low-complexity
  codes for synchronization from deletions and insertions,'' in \emph{2010 48th
  Annual Allerton Conference on Communication, Control, and Computing
  (Allerton)}, Sept 2010, pp. 1412--1419.

\bibitem{Y14}
S.~S.~T. Yazdi and L.~Dolecek, ``A deterministic polynomial-time protocol for
  synchronizing from deletions,'' \emph{IEEE Transactions on Information
  Theory}, vol.~60, no.~1, pp. 397--409, 2014.

\bibitem{S16}
F.~Sala, C.~Schoeny, N.~Bitouzé, and L.~Dolecek, ``Synchronizing files from a
  large number of insertions and deletions,'' \emph{IEEE Transactions on
  Communications}, vol.~64, no.~6, pp. 2258--2273, June 2016.

\bibitem{M11}
N.~Ma, K.~Ramchandran, and D.~Tse, ``Efficient file synchronization: A
  distributed source coding approach,'' in \emph{Information Theory Proceedings
  (ISIT), 2011 IEEE International Symposium on}.\hskip 1em plus 0.5em minus
  0.4em\relax IEEE, 2011, pp. 583--587.

\bibitem{RO}
R.~Gabrys, E.~Yaakobi, and O.~Milenkovic, ``Codes in the damerau distance for
  {DNA} storage,'' in \emph{2016 IEEE International Symposium on Information
  Theory (ISIT)}, July 2016, pp. 2644--2648.

\bibitem{B16}
J.~Brakensiek, V.~Guruswami, and S.~Zbarsky, ``Efficient low-redundancy codes
  for correcting multiple deletions,'' in \emph{Proceedings of the
  Twenty-Seventh Annual ACM-SIAM Symposium on Discrete Algorithms}.\hskip 1em
  plus 0.5em minus 0.4em\relax SIAM, 2016, pp. 1884--1892.

\bibitem{D01}
M.~C. Davey and D.~J. MacKay, ``Reliable communication over channels with
  insertions, deletions, and substitutions,'' \emph{IEEE Transactions on
  Information Theory}, vol.~47, no.~2, pp. 687--698, 2001.

\bibitem{R05}
E.~A. Ratzer, ``Marker codes for channels with insertions and deletions,'' in
  \emph{Annales des t{\'e}l{\'e}communications}, vol.~60, no. 1-2.\hskip 1em
  plus 0.5em minus 0.4em\relax Springer, 2005, pp. 29--44.

\bibitem{M06}
M.~Mitzenmacher and E.~Drinea, ``A simple lower bound for the capacity of the
  deletion channel,'' \emph{IEEE Transactions on Information Theory}, vol.~52,
  no.~10, pp. 4657--4660, Oct 2006.

\bibitem{D06}
E.~Drinea and M.~Mitzenmacher, ``On lower bounds for the capacity of deletion
  channels,'' \emph{IEEE Transactions on Information Theory}, vol.~52, no.~10,
  pp. 4648--4657, Oct 2006.

\bibitem{D07}
S.~Diggavi, M.~Mitzenmacher, and H.~D. Pfister, ``Capacity upper bounds for the
  deletion channel,'' in \emph{2007 IEEE International Symposium on Information
  Theory}, June 2007, pp. 1716--1720.

\bibitem{Ven13}
R.~Venkataramanan, S.~Tatikonda, and K.~Ramchandran, ``Achievable rates for
  channels with deletions and insertions,'' \emph{IEEE Transactions on
  Information Theory}, vol.~59, no.~11, pp. 6990--7013, Nov 2013.

\bibitem{K13}
Y.~Kanoria and A.~Montanari, ``Optimal coding for the binary deletion channel
  with small deletion probability,'' \emph{IEEE Transactions on Information
  Theory}, vol.~59, no.~10, pp. 6192--6219, Oct 2013.

\bibitem{Rah15}
M.~Rahmati and T.~M. Duman, ``Upper bounds on the capacity of deletion channels
  using channel fragmentation,'' \emph{IEEE Transactions on Information
  Theory}, vol.~61, no.~1, pp. 146--156, Jan 2015.

\bibitem{M09}
M.~Mitzenmacher, ``A survey of results for deletion channels and related
  synchronization channels,'' \emph{Probability Surveys}, vol.~6, pp. 1--33,
  2009.

\bibitem{L66}
V.~I. Levenshtein, ``Binary codes capable of correcting deletions, insertions
  and reversals,'' in \emph{Soviet physics doklady}, vol.~10, 1966, p. 707.

\bibitem{S02}
N.~J. Sloane, ``On single-deletion-correcting codes,'' \emph{Codes and Designs,
  de Gruyter, Berlin}, pp. 273--291, 2002.

\bibitem{N14}
D.~Cullina and N.~Kiyavash, ``An improvement to levenshtein's upper bound on
  the cardinality of deletion correcting codes,'' \emph{IEEE Transactions on
  Information Theory}, vol.~60, no.~7, pp. 3862--3870, July 2014.

\bibitem{H02}
A.~S. Helberg and H.~C. Ferreira, ``On multiple insertion/deletion correcting
  codes,'' \emph{IEEE Transactions on Information Theory}, vol.~48, no.~1, pp.
  305--308, 2002.

\bibitem{A12}
K.~A. Abdel-Ghaffar, F.~Paluncic, H.~C. Ferreira, and W.~A. Clarke, ``On
  helberg's generalization of the levenshtein code for multiple
  deletion/insertion error correction,'' \emph{IEEE Transactions on Information
  Theory}, vol.~58, no.~3, pp. 1804--1808, 2012.

\bibitem{S99}
L.~J. Schulman and D.~Zuckerman, ``Asymptotically good codes correcting
  insertions, deletions, and transpositions,'' \emph{IEEE transactions on
  information theory}, vol.~45, no.~7, pp. 2552--2557, 1999.

\bibitem{G14}
V.~Guruswami and C.~Wang, ``Deletion codes in the high-noise and high-rate
  regimes,'' \emph{IEEE Transactions on Information Theory}, vol.~63, no.~4,
  pp. 1961--1970, April 2017.

\bibitem{LG16}
V.~Guruswami and R.~Li, ``Efficiently decodable insertion/deletion codes for
  high-noise and high-rate regimes,'' in \emph{2016 IEEE International
  Symposium on Information Theory (ISIT)}, July 2016, pp. 620--624.

\bibitem{Ob}
------, ``Coding against deletions in oblivious and online models,''
  \emph{arXiv preprint arXiv:1612.06335}, 2016.

\bibitem{Software}
S.~{Kas Hanna} and S.~{El Rouayheb}, ``Implementation of {G}uess \& {C}heck
  ({GC}) codes,'' 2017, \url{http://eceweb1.rutgers.edu/csi/software.html}.

\end{thebibliography}

\end{document}